\renewcommand{\refeq}[1]{(\ref{#1})}
\renewcommand{\eqref}[1]{Equation~(\ref{#1})}
\newcommand{\mA}{\mathcal{A}}
\newcommand{\mD}{\mathcal{D}}
\newcommand{\mF}{\mathcal{F}}
\newcommand{\mM}{\mathcal{M}}
\newcommand{\mP} {\ensuremath{\mathcal{P}}} 
\newcommand{\Skip}{\ensuremath{\mathtt{skip}}\xspace}
\newcommand{\CMON}{\ensuremath{\mathtt{CMON}}\xspace}
\newcommand{\WMON}{\ensuremath{\mathtt{WMON}}\xspace}
\newcommand{\MIDR}{\ensuremath{\mathtt{MIDR}}\xspace}
\newcommand{\tA}{\widetilde{\mA}}
\newcommand{\tP}{\widetilde{\mP}}
\newcommand{\bid}{\mathbf{b}}
\newcommand{\outcomes}{\mathcal{O}}
\newcommand{\type}{\mathbf{x}}
\newcommand{\types}{\mathcal{T}}
\newcommand{\GenericMech}{{\tt AllocToMech}}
\newcommand{\noPositiveTransfers}{no-positive-transfers}
\newcommand{\Norm}[2][1]{ \left\| #2 \right\|_{#1} }
\newcommand{\bMin}{b_{\mathtt{min}}}
\newcommand{\bMax}{b_{\mathtt{max}}}
\newcommand{\AMax}{A_{\mathtt{max}}}
\newcommand{\bt}{\tilde{b}}
\newcommand{\UCB}{\ensuremath{\mathtt{UCB1}}\xspace}
\newcommand{\Rand}{\mathtt{RND}} 
\newcommand{\MosheMech}{\mathtt{SubSample}}
\title{Multi-parameter Mechanisms with Implicit Payment Computation%
\footnote{This is the full version of a conference paper in \emph{ACM EC 2013}.\newline
The multi-parameter transformation from Section~\ref{sec:transform} has  appeared in~\citep{Transform-ec10-jacm}, the full version of~\citep{Transform-ec10}, but it is not a part of the conference version of~\citep{Transform-ec10}. This result is an integral part of this manuscript and the corresponding conference submission.}}
\author{Moshe Babaioff%
    \thanks{Microsoft Research Silicon Valley, Mountain View, CA 94043, USA.
    Email: {\tt moshe@microsoft.com}.}
\and
Robert D. Kleinberg
    \thanks{Computer Science Department, Cornell University, Ithaca, NY 14853, USA. Email: {\tt rdk@cs.cornell.edu}. 
    \newline Parts of this research have been done while R.\ Kleinberg was a Consulting Researcher at Microsoft Research Silicon Valley, and R.\ Kleinberg was partially supported by NSF Awards CCF-0643934, AF-0910940, and IIS-0905467, and by a Microsoft Research New Faculty Fellowship.}
\and
Aleksandrs Slivkins%
    \thanks{Microsoft Research Silicon Valley, Mountain View, CA 94043, USA.
    Email: {\tt slivkins@microsoft.com}.
    \newline Parts of this research have been done while A. Slivkins was a visiting researcher at Microsoft Research New York. }
}
\date{First version: February 2013 \\This version: May 2013}
\begin{document}

\maketitle

\begin{abstract}
In this paper we show that payment computation essentially does not present any obstacle in designing truthful mechanisms, even for multi-parameter domains, and even when we can only call the allocation rule once.
We present a general reduction that takes any  allocation rule which satisfies ``cyclic monotonicity'' (a known necessary and sufficient condition for truthfulness) and converts it to a truthful mechanism using a single call to the allocation rule, with arbitrarily small loss to the expected social welfare.

A prominent example for a multi-parameter setting in which an allocation rule can only be called once arises in sponsored search auctions. These are multi-parameter domains when each advertiser has multiple possible ads he may display, each with a different value per click. Moreover, the mechanism typically does not have complete knowledge of the click-realization or the click-through rates (CTRs); it can only call the allocation rule a single time and observe the click information for ads that were presented. 
On the negative side, we show that an allocation that is truthful for any realization essentially cannot depend on the bids, and hence cannot do better than random selection for one agent.
We then consider a relaxed requirement of truthfulness, only in expectation over the CTRs.
Even for that relaxed version, making any progress is challenging as standard techniques for construction of truthful mechanisms (as using VCG or an MIDR allocation rule) cannot be used in this setting.
We design an allocation rule with non-trivial performance and directly prove it is cyclic-monotone, and thus it can be used to create a truthful mechanism using our general reduction.
\end{abstract}

\xhdr{ACM categories.}
\category{J.4}{Social and Behavioral Sciences}{Economics}
\category{K.4.4}{Computers and Society}{Electronic Commerce}
\category{F.2.2}{Analysis of Algorithms and Problem Complexity}{Nonnumerical Algorithms and Problems}.

\xhdr{Keywords.} algorithmic mechanism design, multi-parameter mechanisms, multi-armed bandits.

\newpage

\section{Introduction}
\label{sec:intro}

In this paper we show that payment computation essentially does not present any obstacle in designing truthful mechanisms, even for multi-parameter domains, and even when we can only call the allocation rule once. This extends the result of~\citep{Transform-ec10} for single parameter domains to multi-parameter domains. We present a general reduction that takes any  allocation rule which satisfies ``cyclic monotonicity'' (a known necessary and sufficient condition for truthfulness) and convert it to a truthful mechanism using a single call to the allocation rule, with arbitrarily small loss to the expected social welfare.
The mechanism 
does not compute the payments explicitly but rather charges random payments having the right expectation.


Such a reduction is particularly attractive as it can handle multi-parameter settings where it is impossible to decouple the computation of the allocation from the actual execution of the allocation. In such situations, the entire mechanism --- including the payment computation --- can only execute a single call to the allocation rule. We call this the ``no-simulation'' constraint; it
can arise when a mechanism interacts with the environment, and the information revealed by the environment depends on the choices made by the allocation rule. The no-simulation constraint is a significant hurdle because the existing approaches to payment computation require multiple calls to the allocation rule, with different vectors of bids.

Sponsored search auctions supply a prominent example of a multi-parameter
setting with the no-simulation constraint.
In this setting each advertiser has multiple possible ads he is interested in displaying, each with a different value per click, and the mechanism does not have complete knowledge of the click-realization or the click-through rates (CTRs).  Instead, it can only allocate ad impressions and observe the click information for ads that were presented. The no-simulation constraint also arises in other contexts, such as packet routing~\citep{Parkes-netecon12}.

We note that our reduction --- the multi-parameter transformation --- has other uses beyond settings with the no-simulation constraint. For example, it can also be used to speed up the computation of payments in most multi-parameter mechanisms. Indeed, it has already been used for this purpose by two recent papers.
\citet{Jain-sagt11} used it to speed up the payment computation for a mechanism that allocates batch jobs in a cloud system. \citet{Huang-ExpMech12} used it to compute payments for their privacy-preserving procurement auction for spanning trees, which is based on the well-known ``exponential privacy mechanism" from prior work~\citep{McSherry-Talwar-focs07}.

\xhdr{Sponsored search mechanisms with unknown CTRs.}
In the remainder of the paper we focus
on the problem of designing truthful mechanisms for an
archetypical multi-parameter setting with the no-simulation
constraint: sponsored search auctions with unknown click-through
rates (CTRs).
The difficulty in designing such allocation rules
stems from the fact that the welfare of a given allocation
depends on clicks of the allocated ads, which are unknown
to the bidders and to the mechanism.
This prevents us from using the VCG mechanism since it depend on choosing a welfare-maximizing allocation.
Yet, it is possible that welfare can at least be approximated.

We focus on a simple single-shot ad auction in which the allocation rule unfolds over time (and the CTRs are not known). As such, we contribute to a growing literature on ad auctions that unfold over time, as they do in practice. The non-strategic version of our model is a well-understood variant of the \emph{multi-armed bandit} problem.

Mechanisms that are truthful for every realization of the clicks would be most attractive, as the strategic behavior in such mechanisms would not depend on the agents' beliefs about the process generating the clicks --- for example, the belief that clicks for each ad are i.i.d.\ from a fixed distribution.
Such mechanisms were constructed in~\cite{MechMAB-ec09, Transform-ec10} for the single-parameter version of the problem.
Unfortunately, the multi-parameter setting is much harder.
In the setting of sponsored search with multiple ads per
bidder and unknown CTRs, we show that if the mechanism is
required to be truthful for every realization of the clicks, then it must be
a trivial mechanism that outputs a fixed allocation (or
distribution over allocations) with no dependence on the bids.
\OMIT{
That is, we show that the allocation essentially cannot depend
on the bids\footnote{More precisely, we show in Lemma~\ref{lm:ExPostWMON-rand}
below that the set of all distributions over allocations that the mechanism
might assign to a bidder
with minimum bid $B$ has diameter tending to zero as $B$ tends to
infinity. \BKnote{The wording of this footnote is unfortunately pretty
hard to understand.}} and thus cannot improve the approximation
guarantee of the na\"ive random allocation for a single agent.
}

In light of this negative result we consider a weaker notion of truthfulness. Assume that clicks are stochastic (meaning that each ad has a CTR, and clicks are independent Bernoulli trials with the specified click probabilities) but the CTRs are not known. The mechanism is required to be truthful for every vector of CTRs; we call mechanisms with this property {\em stochastically truthful}.
The VCG mechanism still cannot be used as we cannot maximize the expected welfare without knowing the CTRs.
An alternative is to use a maximal-in-distributional-range (\MIDR) allocation rule combined with VCG-based payment rule, but we show that for a natural family of \MIDR allocation rules (in which the set of distributions the rule optimizes over is independent of the CTRs) the performance of such rules is no better than randomly selecting an ad to present.

There are a few examples in the literature of non-VCG-based
truthful multi-parameter mechanisms in which bidders
freely choose an option from a hand-crafted menu of allocations and prices,
e.g.~\citep{BGN03,DNS06,DN11}, but this technique similarly fails in
our setting because the bidders do not have a dominant strategy
for choosing from such a menu when they do not know their own CTRs.

Given all these negative results we turn to our multi-parameter transformation which reduces the problem of designing truthful randomized mechanisms to the (seemingly simpler) problem of designing cyclically monotone (\CMON) allocation rules.
In contrast to the negative result for truthfulness for every realization, we
directly craft an allocation rule that
satisfies stochastic \CMON; to our knowledge, the only previous paper to
successful apply this approach is~\citep{LS07}.
Using the transformation we construct a stochastically truthful mechanism that
outperforms the na\"ive random allocation for a single agent, when the difference in value-per-impression of his ads is sufficiently large.
While this is clearly just a small step, it proves to be rather challenging,
and relies heavily on the multi-parameter transformation described above.



\OMIT{
Our goal now is to design a stochastically truthful mechanism that performs better than the na\"ive random allocation for a single agent, clearly a modest goal yet rather challenging. The standard approach for designing such a truthful multi-parameter mechanisms would be to design an MIDR mechanism, but such mechanisms perform poorly in our setting (IS IT SO?).

Thus we take on the challenge of designing a stochastic ''cycle monotone'' allocation rule that is not MIDR (or essentially single parameter).
Indeed, we design a cyclic-monotone allocation rule whose expected social welfare exceeds that of the random selection. As far as we know this is the first result of that kind.

FINISH!

-----------------------

BELOW IS THE OLD abstract by Alex, I suggest removing it:

We study multi-parameter mechanisms in settings where one cannot simulate the allocation rule, in the sense that the mechanism, including the payment computation, can only execute a single call to the allocation rule. The ``no-simulation'' property can arise when a mechanism interacts with the environment, and the information revealed by the environment depends on the choices made by the allocation rule. This is a significant  limitation because the existing approaches to payment computation require multiple calls to the allocation rule, with different vectors of bids.

We have two obstacles to overcome: design an allocation rule which satisfies ``cyclic monotonicity'' (a known necessary condition), and then, given such allocation rule, compute payments that induce incentive-compatibility. Our main result is that we fully resolve the payment computation obstacle. We give a general way to transform any cyclic-monotone allocation rule to a truthful mechanism which only executes a single call to the allocation rule, with arbitrarily small loss to the expected social welfare. This result -- the multi-parameter transformation -- can also be used to speed up the payment computation in settings without the ''no-simulation'' property (and in fact it has already been used in
MOSHE: CITE and remove  ''two papers by other authors'').

The design of cyclic-monotone allocation rules is generally known to be a huge challenge in the literature on mechanism design. In particular, no such allocation rules are known for multi-parameter settings with ''no-simulation'' property. In this paper, we ask a very basic, yet challenging question: for a given well-motivated setting with a ''no-simulation'' property, do there exist non-trivial cyclic-monotone allocation rules? More specifically, we are looking for allocation rules that depend on all submitted bids, and improve over a trivial allocation rule that chooses an outcome uniformly at random among all feasible outcomes.

We consider a natural multi-parameter extension to ''MAB mechanisms'', a paradigmatic mechanism design problem with a ''no-simulation'' property that has been studied in prior work. We obtain two results. First, we find that our question is not vacuous: it has a negative answer for a stricter version of the problem where one looks for truthfulness and cyclic-monotonicity for every realization of randomness in the environment. More precisely, for this version every cyclic-monotone mechanism cannot depend on the bids. On the other hand, the question is not hopeless: we give a positive answer for a more relaxed version of the problem where truthfulness and cyclic-monotonicity are in expectation over the environment. Namely, we design a cyclic-monotone allocation rule whose expected social welfare exceeds that of the random selection.
}

\xhdr{Related work.}
Our earlier paper~\citep{Transform-ec10} considers the limited case of single parameter domains.
It introduced the technique
of designing black-box transformations that perform implicit
payment computation while evaluating a given monotone allocation function
only once. The same paper introduced monotone allocation rules
with strong welfare guarantees for
sponsored search auctions with unknown CTRs, by
modifying multi-armed bandit algorithms to achieve
the requisite monotonicity property.
As all the results in our earlier paper are limited to single-parameter settings, they only apply to sponsored search when each advertiser has only {\em one} ad to display.
In the present paper, we show that the black-box
transformation extends readily from single-parameter
to multi-parameter settings, whereas extending
the results on sponsored search to multi-parameter
settings is much more delicate, and in some cases
(i.e.\ for the strongest notion of truthfulness)
outright impossible.

\citet{SingleCall-ec12} extended the results
of~\citep{Transform-ec10} to multi-parameter
domains under some limitations. Their work
provides a
black-box transformation that allows implicit payment
computation when the allocation function is
maximal-in-distributional-range (\MIDR).
While the \MIDR property is the most widely used method
for achieving truthfulness in multi-parameter settings,
it is not a necessary condition for truthfulness.
In fact several papers
(including this one) depend on multi-parameter
mechanisms that are not \MIDR. By presenting an
implicit payment computation procedure that works
\emph{whenever} there exists a truthful mechanism
utilizing the given allocation function, we
believe that we have posed the multi-parameter
transformation at the appropriate level of
generality for future applications.

The literature contains surprisingly few examples
of truthful multi-parameter mechanisms that are
not based on \MIDR allocation rules.
Mechanisms designed by~\citet{BGN03,DNS06,DN11}
for various combinatorial auction domains make
use of what might be termed the \emph{pricing technique}:
each agent is allowed to choose freely
from a menu of alternatives, each specifying
an allocation and price. The menu presented to
a given agent may depend on the others' bids,
but it must be carefully
constructed so that self-interested agents
each choosing from their own menu will
never jointly select an infeasible allocation.
The taxation principle~\citep{Guesnerie81,Hammond79}
implies that \emph{every} dominant-strategy
truthful mechanism can actually be represented this way,
provided that agents are able to evaluate their own
utilities for different allocations before the
allocation is actually executed. In settings
with the no-simulation constraint,
the taxation principle does not apply because
agents can only evaluate their utility \emph{ex post}.
In the sponsored search setting, for example,
agents have no dominant strategy for choosing
from a menu listing bundles of ad impressions,
because without knowing CTRs they can't precisely
determine the value of an impression; on the
other hand, the mechanism is powerless to offer
a menu listing bundles of clicks, because
there is no way to guarantee that a bidder
who chooses a certain bundle will
receive the specified number of clicks.

Apart from mechanisms with \MIDR allocation
rules and those based on the pricing technique,
we are aware of only one other mechanism in the literature
that is dominant-strategy truthful in
a multi-parameter setting: the scheduling
mechanism of~\citet{LS07} for unrelated machines
that have only two possible processing times.
Their mechanism, like ours, is designed by
directly  constructing an allocation function
that satisfies the cyclic monotonicity constraints.

 \section{Preliminaries}
\label{sec:prelims}

\newcommand{\myProp}{\ensuremath{P}\xspace}

We study reductions from allocations to truthful mechanisms
for multi-parameter domains.
A CS-oriented background on multi-parameter mechanisms can be found in~\cite{ArcherKleinberg-sigecom08,ArcherKleinberg-ec08}, while an Economics-oriented background can be found in~\cite{Ashlagi-econometrica10}. Our main result holds for a very general framework for multi-parameter mechanisms, described below, where agents' types are defined as mappings from outcomes to valuations.
Our reduction invokes the allocation rule only once, which make it particularly useful in domains in which the allocation rule cannot be invoked (or simulated) more than once due to informational constraints.

\xhdr{Types, outcomes, and mechanisms.}
Multi-parameter mechanisms are defined as follows. There are $n$ agents and a set $\outcomes$ of outcomes. Each agent $i$ is characterized by his \emph{type}
    $\type_i: \outcomes \to \Re$,
where $\type_i(o)$ is interpreted as the agent's valuation for the outcome $o\in\outcomes$. For each agent $i$ there is a set of feasible types, denoted $\types_i$. Denote
    $\types = \types_1 \times \ldots \times \types_n$
and call it the \emph{type space};  call $\types_i$ the type space of agent $i$. The mechanism knows $(n,\outcomes,\types)$, but not the actual types $\type_i$;  each type $\type_i$ is known only to the corresponding agent $i$. Formally, a problem instance, also called a \emph{multi-parameter domain}, is a tuple $(n,\outcomes,\types)$.

A (direct revelation) mechanism $\mathcal{M}$ consists of the pair $(\mA, \mP)$,
where $\mA:\types\rightarrow \outcomes$ is the {\em allocation rule} and $\mP:\types\rightarrow \Re^n$ is the {\em payment rule}. Both $\mA$ and $\mP$ can be randomized, possibly with a common random seed. Each agent $i$ reports a type $\bid_i\in \types_i$ to the mechanism, which is called the {\em bid} of this agent. We denote the vector of bids by $\bid = (\bid_1 \LDOTS \bid_n) \in \types$. The mechanism receives the bid vector $\bid\in\types$, selects an outcome $\mA(\bid)$, and charges each agent $i$ a payment of $\mP_i(\bid)$.
The utilities are quasi-linear and agents are risk-neutral: if agent $i$ has type $\type_i \in \types_i$ and the bid vector is $\bid\in \types$, then this agent's  utility is
\begin{align}\label{eq:multiParam-util-defn}
u_i(\type_i;\bid) = \E_{\mM} \left[\, \type_i(\mA(\bid)) - \mP_i(\bid) \,\right].
\end{align}

For each type $\type_i\in \types_i$ of agent $i$ we use a standard notation $(\bid_{-i},\type_i)$ to denote the bid vector $\hat{\bid}$ such that $\hat{\bid}_i=\type_i$ and $\hat{\bid}_j = \bid_j$ for every agent $j\neq i$.

\xhdr{Game-theoretic properties.}
A mechanism is {\em truthful} if for every agent $i$ truthful bidding is a {\em dominant strategy}:
\begin{equation}\label{eq:multiParam-truthful}
u_i(\type_i; (\bid_{-i}, \type_i))
    \geq u_i(\type_i; \bid)
    \quad \forall \type_i\in \types_i, \;\bid \in \types.
\end{equation}
An allocation rule is called \emph{truthfully implementable} if it is the allocation rule in some truthful mechanism.

A mechanism is {\em individually rational (IR)} if each agent $i$  never receives negative utility by participating in the mechanism and bidding truthfully:
    \begin{equation}\label{eq:multiParam-IR}
        u_i(\type_i; (\bid_{-i}, \type_i)) \geq 0
            \quad \forall \type_i\in \types_i, \;\bid_{-i} \in \types_{-i}.
    \end{equation}

The right-hand side in~\eqref{eq:multiParam-IR} represents the maximal guaranteed utility of an ``outside option'' (i.e., from not participating in the mechanism). For example, our definition of IR is meaningful whenever this utility is $0$, which is a typical assumption for most multi-parameter domains studied in the literature.

Note that if the mechanism is randomized, the above properties are defined in expectation over the internal random seed. We can also define utility (and, accordingly, truthfulness and IR) for a given realization of the random seed. We say a mechanism is \emph{universally truthful} if it is truthful for all realizations of the random seed; similarly for IR and other properties.

\xhdr{Our assumptions.}
We make two assumptions on the type space $\types$:
\begin{itemize}
\item \emph{non-negative types}: $\type_i(o)\geq 0$ for each agent $i$, type $\type_i\in \types_i$, each outcome $o\in\outcomes$.

\item \emph{rescalable types}: $\lambda \type_i\in \types_i$
for each agent $i$, type $\type_i\in \types_i$, and any parameter $\lambda\in[0,1]$.
($\lambda \type_i$ denotes the type $\type'_i$ whose valuation for every outcome $o$ satisfies $\type'_i(o) = \lambda \type_i(o)$.)

\OMIT{\item The type space is \emph{convex}: for each agent $i$, any types $\type, \type'\in \types_i$, and any parameter $\lambda\in[0,1]$ it holds that
    $\lambda \type + (1-\lambda) \type' \in \types_i $.}
\end{itemize}

In particular, for each agent $i$ there exists a \emph{zero type}: a type $\type_i\in \types_i$ such that $\type_i(\cdot)\equiv 0$. Let us say that a mechanism is \emph{normalized} if for each agent $i$, the expected payment of this agent is $0$ whenever she submits the zero type.
For domains with non-negative types, it is desirable that all agents are charged a non-negative amount; this is known as the \emph{no-positive-transfers} property.

\xhdr{Dot-product valuations.}
An important special case is \emph{dot-product valuations}, where the type $\type\in \types_i$ of each agent $i$ can be decomposed as a dot product
    $\type(o) = \beta_{\type}\cdot a_i(o)$,
for each outcome $o\in\outcomes$, where
    $\beta_{\type}, a_i(o)\in \Re^{d}$
are some finite-dimensional vectors.
Here the term $a_i(o)$ is the same for all types $\type\in \types_i$ (and known to the mechanism), whereas $\beta_{\type}$ is the same for all outcomes $o\in \outcomes$ and is known only to agent $i$. The term $a_i(o)$ is usually called an ``allocation'' of agent $i$ for outcome $o$, and $\beta_{\type}$ is called the ``private value''. Single-parameter domains correspond to the case $d = 1$.

Note that the type $\type$ of each agent $i$ is determined by the corresponding private value $\beta_{\type}$, and his type space $\types_i$ is determined by
    $D_i = \{ \beta_{\type}:\; \type\in\types_i\} \subset \Re^d$.
Because of this, in the literature on dot-product valuations the term ``type'' often refers to $\beta_{\type}$. To avoid ambiguity, in this section we will refer to $\beta_{\type}$ as ``private value'' rather than ``type'', and call $D_1\times \ldots \times D_n$ the \emph{private value space}.

\OMIT{ 
In a domain with dot-product valuations, types are rescalable if and only if
$$\beta_{\type}\in D_i \Rightarrow \lambda \beta_{\type}\in D_i \;
\text{ for each }\lambda \in [0,1].
$$
In words, assuming rescalable types is equivalent to assuming that the set $D_i$ is star-convex at $0$.
} 

In a domain with dot-product valuations, types are rescalable if and only if for each $\beta_{\type}\in D_i$ and each $\lambda \in [0,1]$ it holds that
    $\lambda \beta_{\type}\in D_i$.
In other words, if and only if the set $D_i$ is star-convex at $0$. To ensure non-negative types, it suffices to assume that
    $D_i\subset \Re_+^{d}$ for each agent $i$, and all allocations are non-negative: $a_i(o) \in \Re_+^{d}$ for all $o \in \outcomes$.

\xhdr{Truthfulness characterization.}
We will use a characterization of truthful mechanisms via a property called ``cycle-monotonicity" (henceforth abbreviated as \CMON). A (randomized) allocation rule $\mA$ satisfies \CMON  if the following holds: for each bid vector $\bid\in\types$, each agent $i$, each $k\geq 2$, and each $k$-tuple
    $\type_{i,0},\; \type_{i,1} \LDOTS \type_{i,k} \in \types_i$
of this agent's types, we have
\begin{align}\label{eq:CMON}
\E_{\mA}\left[ \textstyle \sum_{j=0}^k \;
    \type_{i,j}\left( o_{i,j} \right)
    -     \type_{i,\,(j-1) \bmod{k}}\left( o_{i,j} \right)
    \right] \geq 0,\quad
    \text{ where }
    o_{i,j} = \mA\left(\bid_{-i},\, \type_{i,j} \right)\in \outcomes.
\end{align}
Recall that we are using a general notion of agents' types (and bids), which are defined as functions from outcomes to real-valued valuations.

It is known that $\mA$ is truthfully implementable if and only if it is cycle-monotone, in which case the corresponding payment rule is essentially fixed.

\begin{theorem}[\citet{Rochet-1987}]
\label{thm:CMON-characterization}
Consider an arbitrary multi-parameter domain $(n,\outcomes,\types)$. A (randomized) allocation rule $\mA$ is truthfully implementable if and only if it is cycle-monotone. Assuming rescalable types, for any cycle-monotone allocation rule $\mA$, a mechanism $(\mA,\mP)$ is truthful and normalized if and only if
\begin{align}\label{eq:multiParam-myerson}
\E_{\mA}\left[ \mP_i(\bid) \right] =
   \E_{\mA}\left[
    \bid_i(\mA(\bid)) -
     \textstyle \int_{t=0}^1 \;  \bid_i(\mA(\bid_{-i},\, t\, \bid_i )) \, dt
    \right].
\end{align}
\end{theorem}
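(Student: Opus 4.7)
The plan is to establish the three claims in sequence: necessity of \CMON, sufficiency of \CMON (producing payments for any cycle-monotone rule), and the payment-formula characterization under rescalable types.

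For necessity, I would fix agent $i$ and opponent bids $\bid_{-i}$, and write the pairwise truthfulness inequality from \eqref{eq:multiParam-truthful} between each consecutive pair of types $\type_{i,j}$ and $\type_{i,(j-1)\bmod k}$ around the cycle: when agent $i$'s true type is $\type_{i,j}$, reporting $\type_{i,j}$ must weakly dominate reporting $\type_{i,(j-1)\bmod k}$. Summing these inequalities, the expected-payment terms $\E[\mP_i(\bid_{-i},\type_{i,j})]$ telescope and cancel because each index appears with opposite signs, and what remains is exactly the \CMON inequality \eqref{eq:CMON}.

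For sufficiency, I would use Rochet's potential / shortest-path construction. For each fixed $\bid_{-i}$, consider the complete directed graph on vertex set $\types_i$ with edge weight
\[
w(\type,\type') \;=\; \E_{\mA}\!\left[\type(\mA(\bid_{-i},\type')) - \type(\mA(\bid_{-i},\type))\right]
\]
on the directed edge $\type \to \type'$. The \CMON inequalities are equivalent to the statement that every directed cycle has non-negative total weight, so shortest-path distances $\pi(\type)$ from the zero type are well-defined and finite, with $\pi$ vanishing at the zero type. Setting $\E[\mP_i(\bid_{-i},\type)] = \E[\type(\mA(\bid_{-i},\type))] - \pi(\type)$ yields a truthful, normalized mechanism, because the triangle inequality $\pi(\type') - \pi(\type) \leq w(\type,\type')$ is precisely the truthfulness inequality \eqref{eq:multiParam-truthful}.

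For the payment formula under rescalability, I would reduce to a single-parameter Myerson envelope argument along the ray $\{t\bid_i : t \in [0,1]\}\subset \types_i$, which is available by the rescalability assumption. Writing $V(t) = \E[\bid_i(\mA(\bid_{-i},t\bid_i))]$ and $P(t) = \E[\mP_i(\bid_{-i},t\bid_i)]$, the truthfulness inequality $sV(s) - P(s) \geq sV(t) - P(t)$ applied with $s$ and $t$ swapped forces $V$ to be monotone non-decreasing and sandwiches $P(s) - P(t)$ between $t[V(s)-V(t)]$ and $s[V(s)-V(t)]$. This yields $dP = s\,dV$ in the distributional sense; integrating by parts from $0$ using normalization $P(0) = 0$ (which holds because $0\cdot\bid_i$ is the zero type) gives $P(1) = V(1) - \int_0^1 V(t)\,dt$, i.e.\ \eqref{eq:multiParam-myerson}. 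The converse direction (that any payment rule satisfying \eqref{eq:multiParam-myerson} makes $(\mA,\mP)$ truthful and normalized when $\mA$ is \CMON) follows by reversing the same computation.

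The main technical obstacle is the sufficiency step: turning the cyclic inequalities into well-defined payments requires showing the shortest-path potential $\pi$ is finite and measurable when $\types_i$ is potentially infinite or uncountable; this is the heart of Rochet's original argument and what makes the theorem nontrivial. The third step is comparatively routine once the one-dimensional reduction along the ray $t\bid_i$ is set up, since it then bootstraps off the classical single-parameter envelope identity.
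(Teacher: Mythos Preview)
The paper does not supply its own proof of this theorem: it is stated in the Preliminaries section and attributed to \citet{Rochet-1987} as a known characterization, so there is nothing in the paper to compare your approach against. Your outline is the standard Rochet argument (telescoping for necessity, shortest-path potentials for sufficiency, single-parameter envelope along the ray $t\bid_i$ for the payment formula), and it is essentially correct.

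One point deserves tightening. For the converse direction of the payment-formula claim you write that it ``follows by reversing the same computation.'' But reversing the envelope computation along the ray $\{t\bid_i\}$ only shows that a payment rule satisfying \eqref{eq:multiParam-myerson} makes the mechanism truthful against deviations \emph{on that ray}; it says nothing about a deviation from $\bid_i$ to an arbitrary $\bid_i' \in \types_i$ that does not lie on the segment from the zero type to $\bid_i$. The clean way to close this is to combine your first two steps with the third: by sufficiency there exists \emph{some} truthful normalized payment rule $\mP^*$; by the ``only if'' direction of the envelope argument, $\E[\mP^*_i(\bid)]$ must equal the right-hand side of \eqref{eq:multiParam-myerson}; hence any $\mP$ satisfying \eqref{eq:multiParam-myerson} has the same expected payments as $\mP^*$, and since truthfulness depends only on expected payments, $(\mA,\mP)$ is truthful as well. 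This uniqueness-plus-existence route is what actually delivers the ``if'' direction, not a direct reversal of the one-dimensional calculation.
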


This characterization generalizes a well-known truthfulness characterization of single-parameter mechanisms in terms of monotonicity, due to~\citep{Myerson,ArcherTardos}. Recall that for single-parameter domains, the type of each agent $i$ is captured by a single number (the private value $v_i$), and the outcome pertinent to this agent is also captured by a single number (this agent's allocation $a_i(o)$). The bid of agent $i$ is represented by $b_i\in \Re$. Cycle-monotonicity is then equivalent to a much simpler property called \emph{monotonicity}: for each agent, fixing the bids of other agents, increasing this agent's bid cannot decrease this agent's allocation. The payment formula~\refeq{eq:multiParam-myerson} can also be simplified, e.g. for non-negative valuations it is
\begin{align}\label{eq:myerson}
\mP_i(\bid)  =
 b_i\, \mA_i(b_{-i}, b_i) -
 \textstyle{\int^{b_i}_{0}\,  \mA_i(b_{-i}, u) \,du.}
\end{align}

\newcommand{\Dext}{\mD_{\mathtt{ext}}}

\xhdr{External seed.}
We allow allocation rules to receive input from the environment; a canonical example is pay-per-click auctions where such input consists of user clicks. Formally, the allocation rule and the payment rule depend on the additional argument $\omega$ which captures all relevant input from the environment. (To simplify the notation, we keep the dependence on $\omega$ implicit.) We call $\omega$ the \emph{external seed}, to distinguish from the internal random seed of the mechanism. We assume that $\omega$ is an independent sample from some fixed distribution $\Dext$; this distribution may be unknown to the mechanism.

All game-theoretic properties defined above carry over to mechanisms with external seed if all expectations are over both internal and external seed. In particular, Theorem~\ref{thm:CMON-characterization} carries over with no other modification.

We are primarily interested in properties that hold in expectation over the external seed, for all possible distributions $\Dext$ over the external seed. The corresponding version of a given property \myProp is called \emph{stochastically \myProp}. For example, we are interested in mechanisms that are stochastically truthful, and this requires the allocation rules to be stochastically \CMON.

We also define a stronger version of truthfulness: one that holds for each realization of the external seed. For each game-theoretic property \myProp described above, such as truthfulness, IR and \CMON, a version that holds for each realization of the external seed will be called \emph{ex-post \myProp}. Theorem~\ref{thm:CMON-characterization} holds for every given realization of the external seed (but requires the allocation rule to satisfy ex-post \CMON).

A crucial way in which the external seed is different from the internal randomness is that a given run of the allocation rule might not observe the entire external seed. More precisely, runs of the allocation rule on different bid vectors might observe different portions of the external seed. For example, if an ad is not displayed to a given user, the mechanism does not observe whether this user would have clicked on this ad if it were displayed. It follows that the mechanism might not be able to simulate the allocation rule on different bid vectors -- this is precisely the ``no-simulation" constraint discussed in the Introduction. Moreover, this issue can affect payment computation: the payment prescribed by~\eqref{eq:multiParam-myerson}, although well-defined as a mathematical expression, might not be computable given the information available to the mechanism.%
\footnote{This has been proved in~\citep{MechMAB-ec09,DevanurK09} in the context of multi-armed bandit mechanisms, see Section~\ref{sec:MAB-mech-problem} for more details.}

To address this issue formally, we say that the mechanism is \emph{information-feasible} if for each run of of the mechanism (i.e., for each bid vector $\bid$, each realization of the mechanism's internal randomness, and every possible value of the external seed) the payments are uniquely
determined given the information available to the mechanism.

\OMIT{ 
In the presence of the external seed, each game-theoretic property \myProp described above can have two versions: for each realization of the external seed (we call it \emph{ex-post} \myProp), and in expectation over the external seed (we call it \emph{stochastically \myProp)}. For example: ex-post truthful and stochastically truthful, ex-post \CMON and stochastically \CMON. In particular, Theorem~\ref{thm:CMON-characterization} holds for both versions, in a consistent way: in the ``ex-post'' version, all properties (truthful, cycle-monotone, and normalized) are ex-post, and in the ``stochastic'' version, all these properties are stochastic.}

\newcommand{\rM}{w} 
\newcommand{\rA}{s} 
\newcommand{\rP}{\bar{s}} 
\newcommand{\rN}{r} 
\newcommand{\rT}{q} 
\newcommand{\rY}{x} 
\newcommand{\rZ}{y} 

\newcommand{\CanonProc}{canonical self-resampling procedure}

\newcommand{\mba}{x}
\newcommand{\mbb}{y}

\xhdr{Implicit payment computation for single-parameter domains.}
\citet{Transform-ec10} provide an implicit payment computation result for single-parameter domains. They prove that any
monotone
allocation rule for any single-parameter domain can be transformed into a truthful, information-feasible mechanism with an arbitrarily small loss in expected welfare.
The allocation rule is only invoked once. Below we quote a special case of this result that is most relevant to the present paper.%
\footnote{We restate the result slightly, to make it consistent with our notation. \citep{Transform-ec10} states the mechanism more abstractly, in terms of a general \emph{self-resampling procedure}. The simple description of $\mM_\delta$ that we present here
was first published in~\citep{Parkes-netecon12}.}


\renewcommand{\algorithmcfname}{Mechanism}
\begin{algorithm}[t]
\caption{The single-parameter mechanism $\mM_\delta$ from~\citep{Transform-ec10}}
\begin{enumerate}
\item Collect bid vector $b$.
\item Independently for each agent $i \in [n]$, randomly sample
$\chi_i=1$ with probability $1-\delta$ and otherwise
$\chi_i=\gamma_i^{1/(1-\delta)}$, where $\gamma_i \in [0,1]$ is
sampled uniformly at random. \label{mechstep:rescale}
\item Construct the vector of modified bids, $ x = (\chi_1 b_1, \ldots, \chi_n b_n).$
\item Allocate according to $\tA(b) = \mA(x)$.
\item Compute payments using the formula
$
\tP_i(b) = b_i \cdot \mA_i(x) \cdot \begin{cases}
1 & \mbox{if $\chi_i = 1$} \\
1 - \frac{1}{\delta} & \mbox{if $\chi_i < 1$}
\end{cases}
$.
\end{enumerate}
\label{mech:BKS}
\end{algorithm}
\renewcommand{\algorithmcfname}{Algorithm}

\begin{theorem}[\citet{Transform-ec10}]\label{thm:single-parameter}
Consider an arbitrary single-parameter domain where the private values of each agent lie in the interval $[0,1]$. Let $\mA$ be a stochastically monotone allocation rule for this domain. Then for each $\delta\in(0,1)$, mechanism
    $\mM_\delta = (\tA,\tP)$
(described in Mechanism~\ref{mech:BKS}) is information-feasible and has the following properties.
\begin{itemize}
\OMIT{
\item[(a)] [Structure] Allocation rule $\tA$ is bid-resampling with respect to $\mA$, with resampling distribution that depends on $\delta$ but not on $\mA$.}

\item[(a)] [Incentives] $\mM_\delta$ is stochastically truthful, universally ex-post individually rational. If $\mA$ is ex-post monotone, then $\mM_\delta$ is ex-post truthful.
\item[(b)] [Performance] For $n$ agents and any bid vector $b$ (and any fixed external seed) allocations $\tA(b)$ and $\mA(b)$ are identical with probability at least $1-n\delta$. Moreover, if $\mA$ is $\alpha$-approximate (for social welfare),
then mechanism $\mM_\delta$ is $\alpha/(1-\frac{\delta}{2-\delta})$-approximate.

\item[(c)] [Payments] $\mM_\delta$ is ex-post \noPositiveTransfers; and although it is not universally so, for all realizations of the internal seed it never pays any agent $i$ more than
	$b_i\cdot \mA_i(\mba)\cdot (\frac{1}{\delta}-1)$.
    $\mM_\delta$ is universally ex-post normalized.
\end{itemize}
\end{theorem}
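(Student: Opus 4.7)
The theorem has three parts; I would handle them in the order (a), (c), (b), since individual rationality and the payment structure are natural by-products of the truthfulness analysis. Throughout, I condition on the other agents' rescaling factors $\chi_{-i}$ and on the external seed, and write $h(u) := \mA_i(x_{-i}, u)$ for the resulting (monotone in $u$) conditional allocation to agent $i$. The expected payment to agent $i$, conditional on $b$ and on this conditioning, becomes
\[
  \E[\tP_i(b)] \;=\; (1-\delta)\, b_i\, h(b_i) \;+\; b_i\left(1 - \tfrac{1}{\delta}\right)\!\int_0^1 \delta(1-\delta)\, t^{-\delta}\, h(t b_i)\, dt,
\]
where the second term uses the density $\delta(1-\delta)t^{-\delta}$ of $\chi_i$ on $(0,1)$ induced by $\chi_i = \gamma_i^{1/(1-\delta)}$. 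I would simplify this to $(1-\delta)\, b_i\, h(b_i) - (1-\delta)^2\, b_i^{\delta}\!\int_0^{b_i} v^{-\delta} h(v)\, dv$ after the substitution $v = tb_i$.

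\textbf{Part (a).} Monotonicity of $\tA$ is immediate: the rescaling $b_i \mapsto \chi_i b_i$ is coordinate-wise monotone for every fixed $\chi_i$, so stochastic (resp.\ ex-post) monotonicity of $\mA$ pulls back to stochastic (resp.\ ex-post) monotonicity of $\tA$. For truthfulness, I verify that the expression above equals the Myerson payment~\eqref{eq:myerson} applied to the (conditionally) expected allocation $\bar A_i(u) := (1-\delta) h(u) + \delta(1-\delta) u^{\delta-1}\!\int_0^u v^{-\delta} h(v)\, dv$. Expanding $b_i \bar A_i(b_i) - \int_0^{b_i} \bar A_i(u)\,du$, exchanging the order of integration in the iterated integral via $\int_v^{b_i} u^{\delta-1} du = (b_i^\delta - v^\delta)/\delta$, and simplifying yields exactly $(1-\delta) b_i h(b_i) - (1-\delta)^2 b_i^\delta \int_0^{b_i} v^{-\delta} h(v)\, dv$. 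This is the main computation of the proof. Combined with monotonicity and Theorem~\ref{thm:CMON-characterization} (applied conditionally and then integrated over the remaining randomness), this gives stochastic truthfulness, and ex-post truthfulness whenever $\mA$ is ex-post monotone. Universal ex-post IR is a direct check: when $\chi_i = 1$ the truthful agent's realized utility is $b_i \mA_i(x) - b_i \mA_i(x) = 0$, while when $\chi_i < 1$ it equals $b_i \mA_i(x)/\delta \geq 0$.

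\textbf{Part (c).} The ex-post \noPositiveTransfers\ statement is a consequence of the same calculation: nonnegativity of $(1-\delta) h(b_i) - (1-\delta)^2 \int_0^1 t^{-\delta} h(tb_i)\, dt$ reduces, after bounding $h(tb_i) \le h(b_i)$ by monotonicity, to $\int_0^1 t^{-\delta} dt = 1/(1-\delta)$, which is exactly the slack needed. The bound $b_i \mA_i(x)(1/\delta - 1)$ on the maximum rebate is just the absolute value of $\tP_i$ in the $\chi_i<1$ branch. Universal ex-post normalization is immediate because the zero type gives $b_i = 0$, hence $\tP_i = 0$ regardless of $\chi_i$.

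\textbf{Part (b).} Both allocations agree on bid vector $b$ whenever $\chi_i = 1$ for all $i$, which occurs with probability at least $1 - n\delta$ by the union bound. For the welfare guarantee, since true types are $b$ while $\mA$ is evaluated on $x = \chi \cdot b \le b$ coordinate-wise, monotonicity of the welfare objective in each coordinate gives $\mathrm{welfare}_b(\mA(x)) \ge \mathrm{welfare}_x(\mA(x))$; applying the $\alpha$-approximation of $\mA$ yields $\mathrm{welfare}_x(\mA(x)) \ge \mathrm{OPT}(x)/\alpha$; and since $\mathrm{OPT}(\cdot)$ is a maximum of linear functions and hence convex, Jensen's inequality together with homogeneity gives $\E_\chi[\mathrm{OPT}(\chi b)] \ge \mathrm{OPT}(\E[\chi] b) = \E[\chi]\cdot \mathrm{OPT}(b)$. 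A direct integration shows $\E[\chi_i] = (1-\delta) + \delta(1-\delta)\!\int_0^1 t^{1-\delta} dt = 1 - \tfrac{\delta}{2-\delta}$, yielding the claimed approximation ratio.

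\textbf{Main obstacle.} The substantive content is the algebraic identity in Part (a) that the conditional expected payment equals Myerson's formula applied to $\bar A_i$; once that identity is in hand, every other claim reduces to bookkeeping or a short monotonicity/convexity argument.
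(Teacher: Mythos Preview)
The paper does not prove Theorem~\ref{thm:single-parameter}: it is stated in Section~\ref{sec:prelims} as a result quoted from~\citep{Transform-ec10}, with no proof given here. So there is no ``paper's own proof'' to compare against; your proposal is a reconstruction of the argument from the cited prior work.

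That said, your reconstruction is sound. The core identity---that the conditional expected payment $(1-\delta)\,b_i\,h(b_i) - (1-\delta)^2\, b_i^{\delta}\int_0^{b_i} v^{-\delta} h(v)\,dv$ equals the Myerson payment for the averaged allocation $\bar A_i$---is correct, and your Fubini step via $\int_v^{b_i} u^{\delta-1}\,du = (b_i^\delta - v^\delta)/\delta$ is exactly what makes the cross terms cancel. The IR check, the rebate bound, the normalization, and the union-bound for ``same allocation with probability $\ge 1-n\delta$'' are all routine as you say. For the welfare bound your Jensen/homogeneity argument is fine; a slightly simpler variant avoids convexity altogether by fixing the optimal outcome $o^*$ for $b$ and noting $\mathrm{OPT}(\chi b)\ge \sum_i \chi_i b_i a_i(o^*)$, then taking expectations. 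One small caveat: your argument for ex-post \noPositiveTransfers\ uses $h(tb_i)\le h(b_i)$, which is monotonicity for the \emph{fixed} external seed. Strictly speaking the hypothesis is only stochastic monotonicity, so the inequality is guaranteed only after averaging over the external seed; this yields stochastic \noPositiveTransfers\ directly, and the stronger ex-post statement requires the ex-post monotonicity hypothesis (or is a slight imprecision in the theorem statement as quoted). This is a minor point and does not affect the rest of your argument.
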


\section{The multi-parameter transformation}
\label{sec:transform}

In this section we present our first main contribution: the implicit payment computation result for multi-parameter domains. For a given multi-parameter domain and a given \CMON allocation rule for this domain,%
\footnote{Recall that \CMON is a necessary and sufficient
condition for truthfulness.}
our goal is to design a truthful, information-feasible mechanism
with outcome that is almost always identical to that of the original allocation rule, and this, in particular, ensures a small loss in expected welfare.
We achieve this goal for \emph{every} \CMON allocation rule and \emph{every} multi-parameter domain (under a mild assumption of rescalable, non-negative types). More precisely, we give a general ``multi-parameter transformation" which takes an arbitrary \CMON allocation rule $\mA$ and transforms it into a truthful, information-feasible mechanism which implements the same outcome as $\mA$ with probability arbitrarily close to $1$. This mechanism requires evaluating $\mA$ only once; its allocation rule randomly modifies the submitted bids, and then calls $\mA$ on the modified bids.%
\footnote{The transformation presented here is certainly not the only reduction that transforms multi-parameter allocation rules satisfying \CMON into truthful, information-feasible mechanisms. One appealing feature of our transformation, in comparison to alternatives, is its simplicity. It also optimizes the trade-off between the worst-case bid-to-payment ratio and the probability of adopting the original allocation, as was shown by Wilkens and Sivan (2012) in the single-parameter context.}
The technical contribution here is showing that the natural generalization of the reduction for the single-parameter setting, to the multi-parameter setting, preserves all desired properties.
The non-trivial part of the proof is showing that although the single-parameter transformation only ensures that each agent does not have an incentive to deviate by scaling all his bids by the same scalar in $[0,1]$, he also does not have an incentive to deviate to any other arbitrary bids.

\xhdr{The transformation.}
Our multi-parameter transformation is a remarkably straightforward
generalization of the single-parameter transformation specified in
Mechanism~\ref{mech:BKS}. In fact, there is no need to rewrite the
five steps; the only thing that changes is the interpretation of
the notation. Specifically, the bids $\bid_1,\ldots,\bid_n$ should
now be interpreted as elements of the type spaces
$\types_1,\ldots,\types_n$ rather than as scalars, and for each $i$
the modified bid $\type_i = \chi_i\, \bid_i$ is obtained by multiplying the abstract
type $\bid_i$ (a function from outcomes to reals) by the random scalar $\chi_i$.
(Note that $\chi_i\, \bid_i$ is well-defined because we are assuming the
rescalable types property.) The notation $b_i\cdot \mA_i(x)$ from the single-parameter case is now interpreted as
    $\bid_i (\mA_i(\type))$,
where
    $\type = (\type_1 \LDOTS \type_n)$
is the vector of re-sampled bids. With this interpretation, the payment rule is as follows:
\begin{align*}
\tP_i(\bid) = \bid_i (\mA_i(\type)) \cdot \begin{cases}
1 & \mbox{if $\chi_i = 1$} \\
1 - \frac{1}{\delta} & \mbox{if $\chi_i < 1$}
\end{cases}.
\end{align*}

In the remainder of this section we analyze the properties of
the multi-parameter transformation, proving an analogue of
Theorem~\ref{thm:single-parameter}. The subtlest step, which
occupies most of the analysis, is to prove that the modified
allocation rule $\tA$ satisfies \CMON.

\OMIT{
\xhdr{Generality.} We consider the general multi-parameter setting defined in Section~\ref{sec:prelims}, making two assumptions on the type space $\types$: non-negative, rescalable types.
}

\xhdr{Induced single-parameter domains.}
To aid in the analysis, it will be helpful to introduce the following
notation.
Consider 
a bid vector $\bid\in \types$ and a vector of ``rescaling coefficients''
$\lambda \in [0,1]^n$. Denote
    $$\lambda \otimes b = (\lambda_1 \bid_1 \LDOTS \lambda_n \bid_n ) \in \types. $$
In other words, $\lambda\otimes b$ is the rescaled bid vector where the bid of each agent $i$ is $\lambda_i \bid_i$.
Note that for each $\bid$ the subset
    $$\types_{\bid} = \{ \lambda\otimes b:\; \lambda \in [0,1]^n\} \subset \types$$
forms a single-parameter type space where each agent $i$ has private value $\lambda_i\in [0,1]$ and allocation $b_i(o)$ for every outcome $o$. By abuse of notation, let us treat the allocation and  payment rules for $\types_{\bid}$ as functions from the private value space $[0,1]^n$ rather than the type space $\types_{\bid}$.

\OMIT{
For intuition, consider an allocation rule
   $\mA_{\bid}(\lambda) = \mA(\lambda\otimes \bid)$
for the single-parameter type space $\types_{\bid}$. Note that if the original allocation rule $\mA$ is truthfully implementable for type space $\types$ using payment rule $\mP$, then $\mA_{\bid}$ is truthfully implementable for type space $\types_{\bid}$ using payment rule
   $\mP_{\bid}(\lambda) = \mP(\lambda\otimes \bid)$. This is because
restricting the allocation and payment rules to $\types_{\bid}$ only
limits the set of possible misreports of an agent. Therefore, we make an informal observation that solving our problem for $\mA$ essentially requires us to solve a similar problem for $\mA_{\bid}$. The latter can be done via the single-parameter transformation from~\cite{Transform-ec10}. In fact, we use this approach to solve the original problem for $\mA$: for each bid vector $\bid$ we apply Theorem~\ref{thm:single-parameter} to $\mA_{\bid}$.
}

We want to 
prove that the mechanism $\mM_\delta = (\tA,\tP)$ defined by our
transformation is truthful.
As a starting observation, note that when
one applies the single-parameter transformation given in
Section~\ref{sec:prelims} to the allocation rule defined
by    $\mA_{\bid}(\lambda) = \mA(\lambda\otimes \bid)$,
one obtains a mechanism that coincides with
the restriction of $\mM_\delta$ to $\types_{\bid}$.
By Theorem~\ref{thm:single-parameter}, we may
conclude that the restriction of $\mM_\delta$
to the single-parameter type space $\types_{\bid}$ is
truthful.
Yet this conclusion is not sufficient, since
this truthfulness condition is actually weaker than what we are aiming for: it ensures that a deviation inside the single-parameter type space $\types_{\bid}$ is not beneficial, but says nothing about deviation to other types in $\types\setminus \types_{\bid}$.
 Nevertheless, our proof will show that if the original allocation rule was \CMON, the transformed allocation rule is also \CMON for the domain $\types$, and thus is truthful as needed.

\OMIT{ 
\xhdr{The transformation.}
For each bid vector $\bid\in\types$ and a parameter $\delta\in (0,1)$, let
 $(\tA_{\bid},\tP_{\bid})$
be the single-parameter mechanism for type space $\types_{\bid}$ obtained by applying Theorem~\ref{thm:single-parameter} with parameter $\delta$ to allocation
    $\mA_{\bid}(\lambda) = \mA(\lambda\otimes \bid)$.%
\footnote{The transformed mechanism depends on $\delta$. To simplify presentation, we do not make this dependence explicit in the notation.}

Let $\mathbf{f}_{\mu} = (f_1 \LDOTS f_n)$ be the $n$-tuple of canonical self-resampling procedures with resampling probability $\mu$, for some fixed $\mu\in (0,1)$. (See Algorithm~\ref{alg:canon-proc} on page~\pageref{alg:canon-proc}.) For each bid vector $\bid\in\types$, let
\begin{align}\label{eq:multiParam-reduction-1}
 (\tA_{\bid},\tP_{\bid}) = \GenericMech(\mA_{\bid},\mu,\mathbf{f}_{\mu})
\end{align}
be the single-parameter mechanism for type space $\types_{\bid}$ obtained by applying the single-parameter transformation from~\cite{Transform-ec10} to allocation $\mA_{\bid}$.%
\footnote{Note that the transformed mechanism depends on $\mu$. We do not make this dependence explicit, to simplify the notation.}

The transformed multi-parameter mechanism $\mM_\delta$ is defined as
\begin{align}\label{eq:multiParam-reduction}
    \left( \tA(\bid),\; \tP(\bid) \right) = \left( \tA_{\bid}(\vec{1}\,),\; \tP_{\bid}(\vec{1}\,) \right)
        \text{ for every } b\in \types.
\end{align}
This completes the description of our multi-parameter transformation. The useful properties of this transformation are captured in the theorem below.
}

\begin{theorem}\label{thm:multiParam}
Consider an arbitrary multi-parameter domain $(n,\outcomes,\types)$ with rescalable, non-negative types. Let $\mA$ be a stochastically \CMON allocation rule for this domain. Let $\mM_\delta= (\tA,\tP)$ be the transformed mechanism 
for some parameter $\delta\in(0,1)$. Then $\mM_\delta$ has the following properties:
\begin{itemize}
\item[(a)] [Structure] $\mM_\delta$ is information-feasible.

\item[(b)] [Incentives] $\mM_\delta$ is stochastically truthful and universally ex-post individually rational.  If $\mA$ is ex-post \CMON, then $\mM$ is ex-post truthful.
\item[(c)] [Performance] For $n$ agents and any bid vector $b$ (and any fixed external seed) allocations $\tA(b)$ and $\mA(b)$ are identical with probability at least $1-n\delta$. Moreover, if $\mA$ is $\alpha$-approximation to the maximal social welfare then $\tA$ is $\alpha/\left( 1-\tfrac{2}{1-\delta} \right)$-approximation to the maximal  social welfare.

\item[(d)] [Payments] $\mM$ is ex-post \noPositiveTransfers; and although it is not universally so, for all realizations of the internal seed it never pays any agent $i$ more than
    	$\bid_i(o) (\frac{1}{\delta}-1)$, where $o=\mA(\bid)\in\outcomes$.
    Additionally, $\mM$ is universally ex-post normalized.

\end{itemize}
\end{theorem}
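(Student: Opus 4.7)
My plan is to decompose the proof into the four parts (a)--(d) and exploit the following structural observation: for each fixed bid vector $\bid\in\types$, the restriction of $\mM_\delta$ to the single-parameter subspace $\types_{\bid}=\{\lambda\otimes \bid:\lambda\in[0,1]^n\}$ coincides with the single-parameter transformation (Mechanism~\ref{mech:BKS}) applied to the induced allocation rule $\mA_{\bid}(\lambda)=\mA(\lambda\otimes\bid)$, which is stochastically monotone whenever $\mA$ is stochastically \CMON. Consequently, information-feasibility (a), universal ex-post IR in (b), the identical-allocation and approximation claims of (c), and the per-realization payment magnitude bound and universal ex-post normalization of (d) all follow from Theorem~\ref{thm:single-parameter} via essentially the same computations. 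In particular, IR and the bound $\lvert\tP_i\rvert\le (1/\delta-1)\bid_i(\tA_i(\bid))$ are read off the payoff formula realization by realization ($\chi_i=1$ yields utility $0$ and payment $\bid_i(\tA_i(\bid))$; $\chi_i<1$ yields utility $(1/\delta)\bid_i(\tA_i(\bid))\ge 0$ and payment $(1-1/\delta)\bid_i(\tA_i(\bid))$), and the identical-allocation probability of $1-n\delta$ follows by union bound over the events $\{\chi_i=1\}$. What is \emph{not} inherited this way is truthfulness on the full space $\types$: the restriction to $\types_{\bid}$ only rules out ``ray deviations'' $\type'_i=\lambda\bid_i$ and says nothing about reports $\type'_i\in\types_i$ off the ray through $\bid_i$.

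The main obstacle, and essentially the only substantive new step, is therefore to show that $\tA$ is stochastically \CMON on all of $\types$. Fix an agent $i$, a bid vector $\bid$, and a cycle $\type_{i,0},\ldots,\type_{i,k}\in\types_i$, and let $o_{i,j}=\tA(\bid_{-i},\type_{i,j})$. The key observation is that the \CMON sum $\sum_j \bigl(\type_{i,j}-\type_{i,(j-1)\bmod k}\bigr)(o_{i,j})$ is term-wise a function of a single $o_{i,j}$, so the expected value depends only on the marginal distribution of each $o_{i,j}$ and not on their joint law. Hence I may freely couple the internal rescaling draws $\chi^{(j)}$ used across the $k+1$ legs to share a common $\chi$. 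Conditional on this $\chi$ (and since $\chi_i>0$ almost surely), each $o_{i,j}$ equals $\mA\bigl(\chi_{-i}\otimes\bid_{-i},\;\chi_i\type_{i,j}\bigr)$; by rescalability the cycle $\chi_i\type_{i,0},\ldots,\chi_i\type_{i,k}$ lies in $\types_i$, so stochastic \CMON of $\mA$ applied to this cycle (with others' bids held fixed at $\chi_{-i}\otimes\bid_{-i}$) yields
\[
\E_{\mA}\!\left[\sum_{j=0}^{k}\chi_i\bigl(\type_{i,j}-\type_{i,(j-1)\bmod k}\bigr)(o_{i,j})\right]\;\ge\;0.
\]
Dividing through by $\chi_i>0$ and integrating over $\chi$ gives stochastic \CMON of $\tA$. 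Running the same argument per realization of the external seed upgrades this to ex-post \CMON of $\tA$ whenever $\mA$ is ex-post \CMON.

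With \CMON of $\tA$ in hand, truthfulness in (b) follows by verifying that $\E[\tP_i(\bid)]$ matches the Myerson formula \eqref{eq:multiParam-myerson} of Theorem~\ref{thm:CMON-characterization}: integrating the density $(1-\delta)t^{-\delta}$ of $\chi_i$ on $[0,1)$ against $\bid_i\bigl(\mA_i(\chi_{-i}\otimes\bid_{-i},t\bid_i)\bigr)$ reproduces exactly the expression $\E[\bid_i(\mA_i(\bid))] - \int_0^1 \E[\bid_i(\mA_i(\bid_{-i},t\bid_i))]\,dt$ that appears in~\eqref{eq:multiParam-myerson}. The same computation, combined with the inherited monotonicity of $\mA_\bid$, delivers the ex-post no-positive-transfers claim in (d). Universal ex-post normalization is immediate: the zero type yields $\bid_i(o)\equiv 0$ in every realization, forcing zero payment. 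Thus the only genuinely novel ingredient in the whole proof is the coupling-plus-rescaling argument that transfers \CMON from $\mA$ to $\tA$; all remaining claims are bookkeeping on top of Theorem~\ref{thm:single-parameter}.
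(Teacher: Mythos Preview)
Your proposal is correct and follows essentially the same route as the paper. The core step---showing $\tA$ is \CMON by fixing a common rescaling vector $\chi$, applying \CMON of $\mA$ to the rescaled cycle $\chi_i\type_{i,0},\ldots,\chi_i\type_{i,k}$ with others' bids $\chi_{-i}\otimes\bid_{-i}$, and dividing out $\chi_i>0$---is exactly the paper's argument (the paper writes it a touch more tersely, suppressing the coupling justification and the division by $\chi_i$); for the payment identity the paper appeals to Theorem~\ref{thm:single-parameter} on the induced single-parameter domain $\types_{\bid}$ and sets $\lambda=\vec{1}$, whereas you compute the Myerson integral directly from the density of $\chi_i$, but these are equivalent bookkeeping.
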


\begin{proof}
$\mM_\delta$ is information-feasible by construction, since so are the single-parameter mechanisms obtained from Theorem~\ref{thm:single-parameter}. All claimed properties except truthfulness follow immediately from Theorem~\ref{thm:single-parameter}. Below we prove truthfulness.

\OMIT{ 
By Theorem~\ref{thm:single-parameter}(a), the single-parameter allocation rule $\tA_{\bid}$ has the following property: for each agent $i$ the single-parameter bid $\lambda_i$ is rescaled by the (randomly chosen) factor $\chi_i\in[0,1]$ which does not depend on the bid, and then $\mA_{\bid}$ is called. Therefore, letting $\chi = (\chi_1\LDOTS \chi_n)$, it holds that
\begin{align}\label{eq:lm:multiParam-reduction-assumption}
    \tA_{\bid}(\lambda) = \mA( \chi\otimes (\lambda \otimes \bid) )\quad
        \text{ for all $\bid\in \types$ and $\lambda\in [0,1]^n$}.
\end{align}
} 

We claim that $\tA$ satisfies \CMON. Indeed, fix bid vector $\bid\in\types$, agent $i$, some $k\geq 2$, and a $k$-tuple
 $\type_{i,0},\; \type_{i,1} \LDOTS \type_{i,k} \in \types_i$
of this agent's types. Let us consider a fixed realization of the random vector $\chi \in [0,1]^n$ in step (\ref{mechstep:rescale}) of mechanism $\mM_\delta$.
For each type $\type_{i,j}$, note that
we have
$$ \tA(\type_{i,j}, \bid_{-i})
    = \mA\left( \chi \otimes (\type_{i,j},\,\bid_{-i}) \right) \in \outcomes.
$$
Denote this outcome by $o_{i,j}(\chi)$. Let us apply the cycle-monotonicity of $\mA$ for bid vector $\chi \otimes (\type_{i,j}, \bid_{-i})$:
\begin{align}\label{eq:proof-CMON}
\E_{\mA}\left[ \textstyle \sum_{j=0}^k \; \type_{i,j}(o_{i,j}(\chi)) -
    \type_{i,\;(j-1) \bmod{k}}(o_{i,j}(\chi)) \right]\geq 0.
\end{align}
Recalling that
    $o_{i,j}(\chi) = \tA(\type_{i,j}, \bid_{-i})$,
we observe that for this fixed realization of $\chi$,~\eqref{eq:proof-CMON} is exactly the inequality in the definition of cycle-monotonicity for $\tA$. Therefore taking expectation over $\chi$, we obtain the desired inequality~\eqref{eq:CMON} for $\tA$. Claim proved.%
\footnote{Note that the proof of cycle-monotonicity of $\tA$ did not use any other property of $M_\delta$ other than that the re-scaling factors $\chi_i$ are chosen independently from a distribution that does not depend on $\mA$. The truthfulness of the single-parameter mechanisms $(\tA_{\bid},\tP_{\bid})$  is used in the forthcoming argument about payments.}

\newcommand{\hA}{\widehat{\mA}}
\newcommand{\hP}{\widehat{\mP}}

It remains to prove that in the transformed mechanism $(\tA,\tP)$, the payment rule satisfies~\eqref{eq:multiParam-myerson}. Fix bid vector $\bid$ and consider the transformed single-parameter mechanism $(\tA_{\bid},\tP_{\bid})$ for the single-parameter type space $\types_{\bid}$. In the terminology of single-parameter domains, each agent $i$ receives an allocation
    $\tA_{\bid,\,i}(\lambda) = b_i(\tA_{\bid}(\lambda))$
whenever the bid vector is
    $\lambda \in [0,1]^n$.
Since this is a truthful and normalized single-parameter mechanism, it follows that
\begin{align*}
\E\left[ \tP_{\bid}(\lambda) \right]  =
\E\left[ \lambda_i\, \tA_{\bid,\,i}(\lambda) -
 \int_0^{\lambda_i}\,  \tA_{\bid,\,i}(\lambda_{-i}, t) \,dt\right],
 \quad \forall \lambda\in [0,1]^n.
\end{align*}
Plugging in $\lambda = \vec{1}$
and using the definitions of $\tA_{\bid}, \tP_{\bid}$,
we obtain the desired~\eqref{eq:multiParam-myerson}.
\end{proof}

\section{Multi-parameter MAB mechanisms}
\label{sec:MAB-mech-problem}

Let us define a natural multi-parameter extension to the MAB mechanism design problem studied in~\citep{MechMAB-ec09,DevanurK09,Transform-ec10}.%
\footnote{Here and elsewhere, \emph{MAB} stands for \emph{multi-armed bandits}.}

\xhdr{Problem formulation.}
There are $n$ agents. For each agent there is a known and fixed set of ads he is interested in; we assume that these sets are disjoint.
The total number of ads is denoted by $m$.

As is  common in the literature on sponsored search we assume that agents only value clicks; they have no value for an impression when the ad is not clicked.
For every ad $j$ there is a \emph{value-per-click} $v_j$ such that the unique agent  that is interested in that ad receives utility $v_j$ whenever this ad is clicked; this value is the agent's private information.

A mechanism for this domain proceeds as follows. There are $T$ rounds, where the time horizon $T$ is fixed and known to everyone. In each round the mechanism either decides to \emph{skip} this round or chooses one ad to display. Then the ad is either clicked or not clicked. All agents bid once, before the first round.
 The bid of a given agent consists of a tuple of reported values for his ads. The bid reported for ad $j$ is denoted $b_j$; the entire bid vector of all agents for the $m$ ads is denoted $b = (b_1 \LDOTS b_m)$.
Payments are assigned after the last round.

For each ad $j$, the click probability is fixed over time and denoted $\mu_j$.
In each round when this ad is displayed, it is clicked independently with probability $\mu_j$.
Click probabilities are called \emph{click-though rates} (\emph{CTRs}) in the industry. We assume that the CTRs are not known neither to the mechanism nor to the agents. For brevity, let
    $\mu = (\mu_1 \LDOTS \mu_m)$
be the vector of all CTRs.

\xhdr{Interpretation as a multi-parameter domain.}
For our setting, stochastic truthfulness (and similarly stochastic \CMON, etc.) is a property that holds in expectation over clicks, for all possible CTR vectors $\mu$.

Following the prior work, the external seed is defined as \emph{click realization} $\rho$, in the following sense. For every ad $j$ and every round $t$, realization $\rho(t,j)\in \{0,1\}$ says whether this ad would be clicked if it is shown in this round. In particular, ex-post truthfulness corresponds to truthfulness for every click realization. Note that a given run of a mechanism does not observe the entire click realization:
it only observes clicks for ads that are displayed in a given round.

For every bid vector $b$ and each click realization $\rho$, let $C_j(b,\rho)$ be the expected total number of clicks received by ad $j$, where the expectation is over the internal randomness in the mechanism. Denote
    $C(b,\rho) = (C_1(b,\rho) \LDOTS C_m(b,\rho))$
and call it the \emph{click vector}. We interpret the click vectors as the ``outcomes'' in the multi-parameter domain. Note that a given click vector $C(b,\rho)$ corresponds to expected welfare
    $\sum_j v_j C_j(b,\rho)$.

Note that with this interpretation of the ``outcomes'', the allocation rule is not free to choose any well-defined outcome. Instead, the collection of outcomes that can be implemented on a given run of the mechanism is constrained by the click realization.%
\footnote{Alternatively, we could have defined ``outcomes'' via impressions rather than clicks.
But then an agent would not have a full knowledge of his value for each outcome (his type) as the CTRs are not known to him.
Such a definition necessitates some cumbersome modifications to the framework in Section~\ref{sec:prelims}. Both versions lead to the same results.}

For a given CTR vector $\mu$, let
    $C(b,\mu) = \E_{\rho\sim \mu} C(b,\rho)$,
where the expectation is taken over click realizations $\rho$ according to the corresponding CTRs. Call it the $\mu$-click vector. In expectation over the clicks, the welfare is $\sum_j v_j C_j(b,\mu)$. When considering stochastic truthfulness, it will be more convenient to re-define outcomes as $\mu$-click vectors.

\xhdr{Discussion and background.}
If not for the issue of incentives and the requirement of truthfulness, the welfare-maximization problem for the allocation rule is precisely the  \emph{multi-armed bandit} problem (henceforth, \emph{MAB}), a well-studied problem in Machine Learning and Operations Research.
MAB \emph{mechanisms} can be seen as a version of the MAB problem that incorporates incentives. MAB mechanisms (in the limited single-parameter case, with one ad per agent), were introduced and studied in~\citep{MechMAB-ec09,DevanurK09} for the deterministic case. Subsequently, \citet{Transform-ec10} studied randomized MAB mechanisms.
Below we recap some of the contributions made in~\citep{MechMAB-ec09,DevanurK09}.

MAB mechanisms were suggested as a simple model in which one can study the interplay between incentives and learning, two major issues that arise in pay-per-click auctions.
Pay-per-click
is (along with pay-per-impression) one of the two prevalent business models in the advertising on the Internet, and \emph{the} prevalent pricing model in sponsored search.
Compared to pay-per-impression, pay-per-click reduces the risk that advertisers take, as they only pay when the ad is clicked. The seller, who has some control over clicks, bears the risk instead.  Moreover, advertisers typically have very little or no information about their CTRs, and should not be required to learn more. The pay-per-click model essentially shields the advertisers from this uncertainty.

The crucial assumption in
our model of
MAB mechanisms is that the CTRs are initially not known to the mechanism. This assumption reflects the fact that the CTRs are learned over time, while the ads are being allocated, and so the process of learning should be treated as a part of the game.%
\footnote{If some information on CTRs is known before the allocation starts, this can be modeled via Bayesian priors on CTRs. Following~\citep{MechMAB-ec09,DevanurK09,Transform-ec10}, we focus on the non-Bayesian version.}

The focus of the investigation in~\citep{MechMAB-ec09,DevanurK09} was whether and how the requirement of truthfulness restricts the performance of MAB algorithms
when types are single-parameter.
They found a very severe restriction for deterministic, ex-post truthful mechanisms: the allocation rule can only have a very simple, ``na\"ive'' structure (separating exploration and exploitation),
which severely impacts performance compared to the best MAB algorithms. They capitalize on the ``no-simulation'' constraint to prove that if an allocation rule does not conform to this simple structure, then a truthful mechanism with this allocation rule cannot be information-feasible.

The obstacle of information-feasibility for the single parameter case is circumvented in~\citet{Transform-ec10}
by moving from deterministic to randomized MAB mechanisms.
The single-parameter transformation (Theorem~\ref{thm:single-parameter}) reduces the design of truthful, information-feasible MAB mechanisms to the design of monotone allocation rules for this domain. Further, the authors provide monotone allocation rules whose performance matches that of optimal MAB algorithms. Specifically, they show that (a minor modification of) a standard MAB algorithm \UCB~\citep{bandits-ucb1} is stochastically monotone, and they design a new MAB algorithm which is ex-post monotone and has essentially the same performance.
\OMIT{

\xhdr{Initial ideas.}
The problem of designing truthful mechanisms for welfare maximization in multi-parameter domains is known to be very challenging whenever the VCG mechanism is not feasible (as in combinatorial auctions, when VCG is computationally
infeasible).
For our problem, if the CTRs were public information then the VCG mechanism could have been used to create a truthful mechanism that maximizes the social welfare. Our problem is challenging as the CTRs are not known and this creates an obstacle of information-feasibility, which in particular precludes the use of the VCG mechanism. We next discuss some simple approaches to create truthful mechanisms: the first disregards the bids, and the second uses randomization to reduce the problem to a single parameter problem.

The first obvious approach is \emph{bid-independent} allocation rules -- ones that do not depend on the bids. Among those,
we naturally focus on the allocation rule that samples an ad independently and uniformly at random in each round; call it $\Rand$. It is easy to see that $\Rand$ achieves the best worst-case performance among all bid-independent allocation rules. Namely, if a bid-independent allocation rule chooses some ad $j$ in less than $\tfrac{T}{m}$ rounds, then it may be the case that $b_j$ is high, whereas the bids on all other ads are very low.

A slightly more complicated approach randomly reduces the problem to a single parameter problem as follows.
One ad is selected independently for each agent, uniformly at random from this agent's ads.
Then some truthful single-parameter mechanism $\mM$ is run on the selected ads. Call this mechanism $\MosheMech$.
\footnote{{\bf (Alex: \emph{please} suggest a better name than $\MosheMech$. Use the macro $\backslash$MosheMech, defined in the main file.)}}
This mechanism is truthful (ex-post or stochastically, same as $\mM$) because
for each realization of the selection described above, it is simply a truthful single-parameter mechanism.
The performance of this mechanism is the same as the performance of the trivial $\Rand$ mechanism when there is only one agent.
}

\OMIT{%
\xhdr{Our contribution.}
Given the poor performance of the two simple mechanism above, we would like to construct real multi-parameter mechanism that do not simply reduce the problem to a single parameter problem, and achieve better performance for interesting inputs, for example, for a single agent problem.
The most success approach in the literature for the construction of such mechanisms is using Maximal-In-Distributional Range (MIDR) mechanisms.
Unfortunately, such mechanisms are not feasible for our problem due to the obstacle of information-feasibility (MOSHE: DO WE HAVE A PROOF FOR THAT?). Thus we are left with the challenge of designing CMON allocation rules which outperform the simple rules, at least for large set of interesting inputs.  Coming up with CMON allocation rules is known to be very challenging and there are only few successes we are aware of (CITE).
In Section~\ref{sec:positive} we present a stochastically CMON allocation rule with improved performance. Combined with our implicit payment computation reduction for multi-parameter domains we obtain a stochastically truthful mechanism.
Before that, in Section~\ref{sec:negative} we show that for a single agent, if we insist on ex-post trustfulness then $\Rand$ is essentially optimal.

In Appendix we discuss several 
approaches that seem appealing but do not result in \CMON allocation rules.
}

\OMIT{


\xhdr{Simple approaches.}
As with any new problem, it is
useful
to consider simple solutions before attempting to design smarter ones. In this case, we are interested in allocation rules that are obviously \CMON.

One obvious approach is \emph{bid-independent} allocation rules -- ones that do not depend on the bids. Among those,
we naturally focus
on the allocation rule that samples an ad independently and uniformly at random in each round; call it $\Rand$. It is easy to see than $\Rand$ achieves the best worst-case performance among all bid-independent allocation rules. Namely, if a bid-independent allocation rule chooses some ad $j$ in less than $\tfrac{T}{m}$ rounds, then it may be the case that $b_j$ is high, whereas the bids on all other ads are very low.

A slightly more complicated approach
essentially reduces the problem to a single parameter problem as follows.
The rounds are partitioned in contiguous phases. In each phase, one ad is selected independently for each agent, uniformly at random from this agent's ads. Then some truthful single-parameter mechanism $\mM$ is run on the selected ads for the duration of the phase. Call this mechanism $\MosheMech$.%
\footnote{{\bf (Alex: Moshe, please pick a better name than $\MosheMech$. Use the macro $\backslash$MosheMech, defined in the main file.)}}
This mechanism is truthful (ex-post or stochastically, same as $\mM$) because in each phase, for each realization of the selection described above, it is simply a truthful single-parameter mechanism.
The performance of this mechanism is the same as the performance of the trivial $\Rand$ mechanism when there is only one agent.

In Appendix we discuss several trivial approaches that seem appealing but do not result in \CMON allocation rules.
}


\section{Multi-parameter MAB mechanisms: \\Impossibility result for ex-post truthful mechanisms}
\label{sec:negative}

In this section we present our second main contribution: a very strong impossibility result for ex-post truthful multi-parameter MAB mechanisms. Consider one of the agents and fix the bids of the others. Essentially, we show that an  allocation rule which satisfies ex-post \CMON for that agent, cannot depend on the bid of that agent. More precisely, this holds for a deterministic allocation rule if the bids are large enough, as well as for any allocation rule (deterministic or randomized) that never skips a round. For randomized allocation rules that may skip a round, we show that if the allocation rule satisfies ex-post \CMON then it cannot achieve a nontrivial worst-case approximation ratio.

\OMIT{An allocation rule is called \emph{scale-free} if scaling the bid vector $b$ by a scalar $\lambda >0$ does not change the allocation (assuming that $\lambda b$ is a valid vector of bids). Intuitively, this is a mild and reasonable assumption: e.g., we do not expect the outcome to depend on whether the bids are priced in dollars or in euro.}

\begin{theorem}\label{thm:ExPostWMON}
Let $\mA$ be an allocation rule for multi-parameter MAB which satisfies ex-post \CMON. Fix any agent $i$, and fix bids submitted by all other agents.
\begin{itemize}
\item[(a)] If $\mA$ is any allocation rule (deterministic or randomized) that
never skips a round, and if agent $i$ is the only agent,
then the allocation has no dependence on his bids.
\item[(b)] If $\mA$ is deterministic,
then there exists a finite $B$ such that the allocation for agent
$i$ does not depend on his bids, as long as all his bids are larger than $B$.
\item[(c)] If $\mA$ is randomized, then its worst-case approximation
ratio (over all bid vectors of agent $i$) is no better than that of the
trivial randomized allocation rule that ignores agent $i$'s bid, samples
one of his ads uniformly at random, and allocates all impressions to
that ad.
 \end{itemize}
\end{theorem}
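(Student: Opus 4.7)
The plan is to exploit the characterization given by Theorem~\ref{thm:CMON-characterization} at the ex-post level: for every fixed click realization $\rho$, the induced map from agent $i$'s bid $b_i$ to the click vector $C(b_i,\rho)$ that he receives must be cyclically monotone. Geometrically this is equivalent to saying that $C(b_i,\rho)$ lies in $\arg\max_{o} b_i \cdot o$ over the set of click vectors achievable as $b_i$ varies with $(\bid_{-i},\rho)$ fixed. All three parts are proved by showing that this argmax condition collapses the achievable set in each regime.

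For part (a), I would argue by contradiction. Suppose that under some realization $\rho$, two bids $b_i, b'_i$ of the sole agent yield distinct click vectors. Let $t^*$ be the first round at which the adaptive trajectories of $\mA$ under $b_i$ and $b'_i$ (both with realization $\rho$) diverge; before $t^*$ the ads displayed and the clicks observed coincide, so the divergent round-$t^*$ actions $j \neq j'$ are determined by the bids alone given the common history. I would then construct a companion realization $\rho'$ that coincides with $\rho$ on the $(t,j)$ pairs observed before round $t^*$, and on the remaining rounds places clicks strategically so that the resulting click vectors $C(b_i,\rho')$ and $C(b'_i,\rho')$ satisfy $(b_i - b'_i) \cdot (C(b_i,\rho') - C(b'_i,\rho')) < 0$. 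The never-skip constraint is what makes such a $\rho'$ available: because the algorithm must commit to $T$ impressions under both trajectories, the adversary can funnel each trajectory into a click pattern of its choice on the post-$t^*$ rounds, giving enough freedom to engineer the \CMON violation.

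For part (b), where skipping is allowed but the rule is deterministic, the never-skip trick fails because $\mA$ can escape to a null action. I would instead use a scaling argument: for any direction $\hat b_i$, the map $\lambda \mapsto \hat b_i \cdot C((\bid_{-i},\lambda \hat b_i),\rho)$ is non-decreasing (by 2-cycle \CMON applied to scaled bids), and because the deterministic rule has only finitely many outcomes for each $\rho$, this step function stabilizes above some threshold. The harder step is to promote these per-direction, per-realization thresholds into one uniform $B$; here one uses the finiteness of $\mA$'s distinct allocation behaviors together with cross-realization \CMON inequalities to rule out an unbounded sequence of break points in the bid space. Once the outcome is constant for bids coordinatewise larger than $B$, the allocation for agent $i$ is bid-independent in that regime. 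For part (c), I would lift this to randomized rules by conditioning on the internal seed: each conditional realization is deterministic and ex-post \CMON, so by (b) it is bid-independent at large bids. Therefore, on the large-bid regime, the distribution that $\mA$ places on agent $i$'s ads depends only on $\bid_{-i}$ and the seed; a standard adversarial argument (concentrate all of $i$'s value on an ad that receives a minority share of impressions) then shows the approximation ratio can be no better than that of $\Rand$.

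The delicate point throughout is the construction in part (a): because $\mA$ is adaptive, its post-$t^*$ behavior depends on the clicks it observes, and those observations genuinely differ between the $b_i$- and $b'_i$-trajectories precisely because their round-$t^*$ choices differ. One must specify $\rho'$ on all $(t,j)$ pairs that either trajectory might touch, respecting the fact that $\rho'$ is a single fixed click realization, while simultaneously arranging the two resulting click vectors to witness a \CMON violation. The never-skip property is the lever that makes this possible: it prevents $\mA$ from evading the trap by refusing to display ads in the critical rounds.
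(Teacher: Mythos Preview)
Your sketch has the right instincts for (a) but misses a key ingredient, and your plan for (c) contains a genuine error.

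For (a), you propose to violate \WMON directly between the two given bids $b,b'$ by choosing a realization $\rho'$. This need not be possible: with $m=2$, $T=1$, and the rule that plays $\arg\max_j b_j$, the bids $b=(1,0)$ and $b'=(0,1)$ yield different allocations, yet $(b-b')\cdot(C(b,\rho)-C(b',\rho))\ge 0$ for \emph{every} $\rho$; a violation only appears when one of them is compared to a third bid. The paper's proof introduces $\bt=\vec{1}+\max(b,b')$, which coordinatewise dominates both; since the allocations under $b$ and $b'$ differ, at least one of them (say $b$) differs from $\bt$'s allocation at the critical round, and the never-skip constraint $\sum_i \mA_i(\cdot,t,\rho)=1$ then forces some ad $i$ with $\mA_i(\bt,t,\rho)<\mA_i(b,t,\rho)$. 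Setting only ad $i$ to click at round $t$, and nothing to click afterward, violates \WMON between $b$ and $\bt$. Separately, your phrasing ``actions $j\neq j'$'' treats the rule as deterministic, but part (a) must cover randomized rules, where what differs at the critical round is the probability vector $\mA(\cdot,t,\rho)\in\R^m_+$.

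For (c), your reduction fails: conditioning a randomized rule on its internal seed does \emph{not} yield a deterministic rule satisfying ex-post \CMON. Ex-post \CMON only asserts the \CMON inequality in expectation over the internal seed, for each fixed click realization; individual seed-realizations need not be monotone at all (a rule that flips a fair coin between two non-monotone deterministic rules whose average allocation is constant is ex-post \CMON while neither branch is). Hence part (b) cannot be invoked seed-by-seed. The paper's argument for (c) is entirely different and more technical: it proves a quantitative continuity bound (Claim~\ref{cl:ExPostWMON-rand}) showing that for ``$y$-balanced'' bid vectors above a threshold the allocation vector varies by at most $\eps\,g(t,y)$ in $\ell_1$, by induction on $t$ applying \WMON directly to the expected allocation. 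The approximation lower bound then follows by comparing the uniform bid vector $B\vec{1}$ to vectors with a single coordinate scaled up by $y$. For (b), the paper likewise does not use your scaling idea; it mimics the argument of (a) after defining $B$ so that above it the deterministic rule provably does not skip.
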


The first conclusion presumes there is only a single agent, and to prove the remaining two conclusions it suffices to consider the case of a single agent, because from the perspective of any given agent the ads allocated to other agents can be represented as skips. (In particular, allowing skips in single-agent allocation rules is essential for the generalization to multiple agents.) In the rest of this section we assume a single agent with $m$ ads.

To prove our result we need to set up some notation. Recall that the bids of the agent are represented by a vector
    $b=(b_1 \LDOTS b_m) \in\R^m_+$.
For a given allocation rule $\mA$ and a given click-realization $\rho$, the \emph{impression allocation} $\mA(b,t,\rho) \in \R^m_+$ is a vector of probabilities, in expectation over the random seed of the algorithm, so that
$\mA_i(b,t,\rho)$ is the probability that ad $i$ is chosen in round $t$ given bid vector $b$ and realization $\rho$.

\xhdr{Weak monotonicity.}
We use \CMON through a special case where $k=2$ in~\eqref{eq:CMON}; this special case is known in the literature as \emph{weak monotonicity}, henceforth abbreviated \WMON. \WMON is equivalent to \CMON if there are finitely many outcomes and the type space is convex~\citep{WMON-SaksYu}.  It follows that in our setting, ex-post \WMON is equivalent to ex-post \CMON for deterministic allocation rules. For more background on \WMON, see~\citep{ArcherKleinberg-ec08}.

Let us restate \WMON in the notation of multi-parameter MAB mechanisms. Recall that the click vector $C(b,\rho)$ is a vector such that $C_j(b,\rho)$ is the total expected number of clicks for ad $j$, given bid vector $b$ and realization $\rho$. Then
$$ C_j(b,\rho) \textstyle
    = \sum_{t=1}^T \; \rho(t,j)\, \mA_j(b,t,\rho)
    = \sum_{t=1}^T \; \Delta_t(\rho)\, \mA(b,t,\rho),
$$
where $\Delta_t(\rho)$ is the $m\times m$ diagonal matrix with diagonal entries
    $(\rho(t,1) \LDOTS \rho(t,m))$.
Ex-post \WMON states the following: for any realization $\rho$ and any bid vectors $b,\bt\in \R^m_+$,
$$ (\bt-b)\cdot (\;
    C(\bt,\rho) - C(\bt,\rho)
    \;) \geq 0
$$
Re-writing this in terms of the impression allocation, we obtain:
\begin{align}\label{eq:WMON-alloc}
(\bt-b)^\dag\;
    \textstyle\sum_{t=1}^T \; \Delta_t(\rho) (\;
        \mA(\bt,t,\rho) - \mA(b,t,\rho)
    \;)\geq 0.
\end{align}
Here and elsewhere, $M^\dag$ denotes a transpose of a matrix $M$.

\xhdr{Analysis for allocation rules with no skips (Theorem~\ref{thm:ExPostWMON}(a)).}
For the sake of contradiction, assume that
    $\mA(b,t,\rho) \neq \mA(b',t,\rho)$
for some round $t$, click-realization $\rho$, and bid vectors
    $b, b' \in \R^m_+ $.
Pick the smallest $t$ for which such counterexample exists. Assume w.l.o.g.\ that $\rho \equiv 0$ for all rounds
after $t$.
For each ad $i$, let $\rho_i$ be a realization that coincides with $\rho$ on all rounds but $t$, and in round $t$ ad $i$ is clicked and all other ads are not clicked.

Let $\bt = \vec{1}+\max(b,b') \in \R^m_+$, where $\max(b,b')$ is the coordinate-wise maximum of $b$ and $b'$.
Since
    $\mA(b,t,\rho) \neq \mA(b',t,\rho)$,
we can w.l.o.g. assume that
           $\mA(\bt,t,\rho) \neq \mA(b,t,\rho)$.
Since $\mA$ never skips a round,
\begin{equation} \label{eq:5.1.a.2}
\textstyle           \sum_{i=1}^m \mA_i(\bt,t,\rho) = 1 = \sum_{i=1}^m \mA_i(b,t,\rho).
\end{equation}
Combining
    $\mA(\bt,t,\rho) \neq \mA(b,t,\rho)$
with~\eqref{eq:5.1.a.2} we deduce that
for some ad $i$,
    $\mA_i(\bt,t,\rho) < \mA_i(b,t,\rho).$
We claim that \WMON is violated for bids $b,\bt$ and realization $\rho_i$. Indeed, consider~\eqref{eq:WMON-alloc} for realization $\rho_i$.
The sum in~\eqref{eq:WMON-alloc} is 0 for all rounds other than $t$ because $\mA(\bt,s,\rho) = \mA(b,s,\rho)$
for all rounds $s< t$ (by minimality of $t$), and $\rho_i\equiv 0$ for all rounds $s>t$. For round $t$, the sum in~\eqref{eq:WMON-alloc} is $0$ for all ads other than $i$, by definition of $\rho_i$. Thus, the sum is simply equal to
    $(b_i-\bt_i) \cdot [\mA_i(b,t,\rho) - \mA_i(\bt,t,\rho)]$, which is negative, contradicting~\eqref{eq:WMON-alloc}.

\xhdr{Analysis for the deterministic case (Theorem~\ref{thm:ExPostWMON}(b)).} We now address deterministic allocation rules that may skip rounds. The analysis of this case captures the main ideas of the randomized case while being significantly easier to present.

Fix click-realization $\rho$ and round $t$. Let $\mA$ be the deterministic allocation rule for agent $i$ that is induced by fixing the bids of all other agents. If $\mA$ skips round $t$, write $\mA(b,t,\rho) = \Skip$. For a vector $b = (b_1 \LDOTS b_m) \in \R^m_+$,
denote
    $\max(b\,) = \max_{1\leq i\leq m} b_i$.
Define $\min(b)$ similarly.

\OMIT{
We will use the following shorthand.
For a vector $b = (b_1 \LDOTS b_m) \in \R^m_+$,
denote
    $\min(b\,) = \min_{1\leq i\leq m} b_i$.
For two vectors $b,\bt\in \R^m_+$, let
    $\min(b,\bt) \in \R^m_+$
be the coordinate-wise minimum of $b$ and $\bt$. Use the same notation for $\max(b)$ and $\max(b,\bt)$. Let $b \succeq \bt$ denote coordinate-wise ``$\geq$", that is,
    $b_i \geq \bt_i$ for all $i$.

Fix click-realization $\rho$ and round $t$. Let $\mA$ be a deterministic allocation rule for a single agent. If $\mA$ skips round $t$, write $\mA(b,t,\rho) = \Skip$.
}

One technicality in the analysis is handling skips; we deal with it using the following notions:%
\footnote{We use a standard convention that $\sup(\emptyset)=-\infty$.}
\begin{align}
\bMin(t,\rho) &=  \sup\{ \max(b) :\;
    \text{$b \in \R^m_+$ and $\mA(b,t,\rho) = \Skip$} \}. \nonumber \\
B &= \max\left(\quad \{0\} \cup \left\{
    \bMin(t,\rho):\;
    \text{$\exists\, t,\rho$ such that $\bMin(t,\rho)<\infty $}
    \right\} \quad \right). \label{eq:negative-det-B}
\end{align}
Note that $B=0$ if $\bMin(t,\rho)=\infty$ for all $t$ and $\rho$.
For a given round $t$ and realization $\rho$, $\bMin(t,\rho)$ is defined such that if all $m$ bids are larger than $\bMin(t,\rho)$ then the allocation does not skip at round $t$ on realization $\rho$. $B$ is defined such that for every realization and every round, if all bids are larger than $B$ then the allocation rule never skips.

\begin{claim}\label{cl:ExPostWMON-det-skip}
Let $\mA$ be a deterministic single-agent allocation rule which satisfies ex-post \WMON.
Then for each click-realization $\rho$ and each round $t$, $\mA$ does not depend on the bid vector $b$ for all bid vectors
$b \in (B, \infty)^m$, where $B$ is defined in~\eqref{eq:negative-det-B}.
\end{claim}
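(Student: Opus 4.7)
The plan is to proceed by induction on the round number, showing that $\mA(\cdot, t, \rho)$ is constant on $(B,\infty)^m$ for every $t$ and every $\rho$. Formally, I assume for contradiction that the claim fails and let $t^\star$ be the smallest round at which some $b_0, b_1 \in (B,\infty)^m$ and realization $\rho$ satisfy $\mA(b_0, t^\star, \rho) \neq \mA(b_1, t^\star, \rho)$. By minimality of $t^\star$, for every $s < t^\star$ and every realization, $\mA(\cdot, s, \cdot)$ is constant on $(B,\infty)^m$; consequently, the sequence of displayed ads -- and hence the observed click history -- in rounds $1, \ldots, t^\star - 1$ coincides for all $b \in (B,\infty)^m$ under any fixed realization, as long as that realization agrees with $\rho$ on those early rounds.

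Next, I introduce the dominating bid $\bt := \max(b_0, b_1) + \vec{1} \in (B,\infty)^m$; since $\mA(b_0, t^\star, \rho) \neq \mA(b_1, t^\star, \rho)$, WLOG $\mA(\bt, t^\star, \rho) \neq \mA(b_0, t^\star, \rho)$, and I rename $b := b_0$. I then split into cases based on whether each of $\mA(\bt, t^\star, \rho)$ and $\mA(b, t^\star, \rho)$ equals $\Skip$ or is an ad. The skip cases are immediately ruled out by the definition of $B$: if, say, $\mA(\bt, t^\star, \rho) = \Skip$, then $\max(\bt)$ lies in the set whose supremum defines $\bMin(t^\star, \rho)$, so $\max(\bt) \leq \bMin(t^\star, \rho) \leq B$, contradicting $\max(\bt) > B$ (which follows from $\bt \in (B,\infty)^m$); the symmetric case for $b$ is identical.

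The remaining main case is that $\mA(\bt, t^\star, \rho) = j_1$ and $\mA(b, t^\star, \rho) = j_2$ for two distinct ads $j_1, j_2$. I construct an auxiliary realization $\rho'$ that agrees with $\rho$ on rounds $s < t^\star$, has $\rho'(t^\star, j_2) = 1$ and $\rho'(t^\star, k) = 0$ for $k \neq j_2$, and has $\rho'(s, k) = 0$ for all $s > t^\star$ and all $k$. Since $\rho'$ matches $\rho$ on the relevant click history, $\mA(b, t^\star, \rho') = e_{j_2}$ and $\mA(\bt, t^\star, \rho') = e_{j_1}$. Applying ex-post \WMON in the form~\eqref{eq:WMON-alloc} to $(b, \bt)$ on $\rho'$: rounds $s < t^\star$ contribute zero by the inductive invariant, rounds $s > t^\star$ contribute zero because $\Delta_s(\rho') = 0$, and the round-$t^\star$ term reduces to $\Delta_{t^\star}(\rho')(e_{j_1} - e_{j_2}) = -e_{j_2}$. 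The total evaluates to $(\bt - b)^\dag(-e_{j_2}) = b_{j_2} - \bt_{j_2} < 0$, contradicting \WMON. The main subtlety is that once round-$t^\star$ choices differ, the allocations and observed histories of $b$ and $\bt$ can diverge freely for $s > t^\star$; zeroing out $\rho'$ after $t^\star$ neutralizes these potentially divergent contributions by forcing $\Delta_s(\rho') = 0$, thereby localizing the entire \WMON inequality to round $t^\star$, where the sign of the contribution is controlled by the construction of $\rho'$.
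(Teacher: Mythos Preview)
Your argument matches the paper's proof in overall structure (induction on $t$, the dominating bid $\bt=\max(b_0,b_1)+\vec{1}$, and the realization $\rho'$ that zeroes out rounds after $t^\star$), and it is correct in what the paper calls ``Case~2'', i.e., when $\bMin(t^\star,\rho)<\infty$. The gap is in the sentence where you ``immediately rule out'' the skip cases. You write that $\mA(\bt,t^\star,\rho)=\Skip$ implies $\max(\bt)\le \bMin(t^\star,\rho)\le B$. The first inequality is fine; the second is not always valid. By~\eqref{eq:negative-det-B}, $B$ is the maximum of $\bMin(t,\rho)$ only over those pairs $(t,\rho)$ for which $\bMin(t,\rho)<\infty$. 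If $\bMin(t^\star,\rho)=\infty$, it simply does not enter the definition of $B$, and you cannot conclude $\bMin(t^\star,\rho)\le B$. In that regime your chosen $\bt$ may well skip, and---more problematically---your WLOG may force $b=b_0$ to be the one that skips (e.g., if $\mA(b_0,t^\star,\rho)=\Skip$, $\mA(b_1,t^\star,\rho)=e_j$, and $\mA(\bt,t^\star,\rho)=e_j$), at which point your $\rho'$ construction no longer yields a negative WMON sum.

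The paper handles $\bMin(t^\star,\rho)=\infty$ as a separate case with a different choice of $\bt$. Since the two original bids differ at round $t^\star$, at least one of them, say $b$, is not a skip; write $\mA_i(b,t^\star,\rho)=1$. Because $\bMin(t^\star,\rho)=\infty$, there exist skip-bids with arbitrarily large $\max(\cdot)$, so one can pick $\bt\in(\max(b),\infty)^m$ with $\mA(\bt,t^\star,\rho)=\Skip$. Applying your $\rho'$ (equivalently the paper's $\rho_i$) to this pair gives the round-$t^\star$ contribution $(\bt_i-b_i)(0-1)=b_i-\bt_i<0$, contradicting ex-post \WMON. Adding this case completes your proof; everything else is essentially identical to the paper's argument.
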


\begin{proof}
For the sake of contradiction, assume that
    $\mA(b,t,\rho) \neq \mA(b',t,\rho)$
for some round $t$, click-realization $\rho$, and bid vectors
    $b, b' \in (B, \infty)^m $.
Pick the smallest $t$ for which such counterexample exists. Assume w.l.o.g. that $\rho \equiv 0$ for all rounds
after $t$.
For each ad $i$, let $\rho_i$ be a realization such that it coincides with $\rho$ on all rounds but $t$, and in round $t$ ad $i$ is clicked and all other ads are not clicked.

Let us consider two cases, depending on whether $\bMin(t,\rho)$ is finite.

\xhdr{Case 1: $\bMin(t,\rho) = \infty$.} At least one of
    $\mA(b,t,\rho)$, $\mA(b',t,\rho)$
is not equal to \Skip.
Since
    $\mA(b,t,\rho) \neq \mA(b',t,\rho)$,
we can w.l.o.g. assume that $\mA(b,t,\rho) \neq \Skip$.
Hence, $\mA_i(b,t,\rho)=1$ for some ad $i$.
Since $\bMin(t,\rho) = \infty$, there exists
    $\bt \in ( \max(b\,),\; \infty)^m$
such that $\mA(\bt,t,\rho) = \Skip$.

We claim \WMON is violated for bids $b,\bt$ and realization $\rho_i$. As in the first case, we see that the sum in~\eqref{eq:WMON-alloc} is 0 for all rounds other than $t$, and for round $t$ the sum is $0$ for all ads other than $i$. Again, it follows that the sum is simply equal to $b_i-\bt_i$, which is negative, contradicting~\eqref{eq:WMON-alloc}. Claim proved.

\xhdr{Case 2: $\bMin(t,\rho) < \infty$.}
The proof of this case is very similar to the proof of Theorem~\ref{thm:ExPostWMON}(b).

Recall that in case 1 it holds that $\bMin(t,\rho) < \infty$.
Let $\bt = \vec{1}+\max(b,b') \in \R^m_+$, where $\max(b,b')$ is the coordinate-wise maximum of $b$ and $b'$.
Since
    $\bMin(t,\rho) < \infty$,
it follows that
    $B\geq \bMin(t,\rho)$,
so neither $\mA(b,t,\rho)$ nor $\mA(b',t,\rho)$ nor $\mA(\bt,t,\rho)$ is equal to \Skip.
Since
    $\mA(b,t,\rho) \neq \mA(b',t,\rho)$,
we can w.l.o.g. assume that
    $\mA(\bt,t,\rho) \neq \mA(b,t,\rho)$.
In particular,
    $\mA_i(\bt,t,\rho)=0$
and
    $\mA_i(b,t,\rho) = 1$
for some ad $i$.

We claim that \WMON is violated for bids $b,\bt$ and realization $\rho_i$. Indeed, consider~\eqref{eq:WMON-alloc} for realization $\rho_i$. The sum in~\eqref{eq:WMON-alloc} is 0 for all rounds other than $t$ because $\mA(\bt,s,\rho) = \mA(b,s,\rho)$
for all rounds $s< t$ (by minimality of $t$), and $\rho_i\equiv 0$ for all rounds $s>t$. For round $t$, the sum in~\eqref{eq:WMON-alloc} is $0$ for all ads other than $i$, by definition of $\rho_i$. Thus, the sum is simply equal to $b_i-\bt_i$, which is negative, contradicting~\eqref{eq:WMON-alloc}. Claim proved.
\end{proof}

\subsection{Analysis of the randomized case: proof of Theorem~\ref{thm:ExPostWMON}(c)}
\label{thm:ExPostWMON-c-proof}

The proof of the randomized case of Theorem~\ref{thm:ExPostWMON} is technically  more involved than the proof of Theorem~\ref{thm:ExPostWMON}(b)).
In particular, even stating the analog of Claim~\ref{cl:ExPostWMON-det-skip} requires a considerable amount of setup.

\newcommand{\dmn}{{\mathcal{D}}}

Define functions
$f,g,G: \N \times \R_+ \to \R_+$ by the following recurrence:
    $f(0,y)=g(0,y)=G(0,y)=0$ for all $y$; while for $t>0$:
\begin{align*}
f(t,y) &= 3ym\,G(t-1,y) + 1\\
g(t,y) &= 2\,f(t,y) + 2 + 3ym\,G(t-1,y)\\
G(t,y) &= g(t,y) + G(t-1,y) = \textstyle{\sum_{s=1}^t g(s,y)}.
\end{align*}
For real numbers $B \geq 0,y \geq 1$, let $\dmn(B,y)$ denote the set
\[
\dmn(B,y) = \{b \, : \, \min(b\,) \geq B, \max(b)/\min(b) \leq y\}.
\]
We will refer to a bid vector as ``$y$-balanced'' if it satisfies
$\max(b)/\min(b) \leq y$.

Let $\mA$ be a (potentially randomized) allocation rule. Fix realization $\rho$. For all times $t$ and all $\eps>0, y \geq 1$ let
\begin{align*}
\AMax(t,\rho,y) &= \limsup_{x\to \infty}
    \left\{\; \Norm{ \mA(b, t,\rho) }\;:\; b\in \dmn(x,y) \; \right\} \\[6pt]
\bMax(t,\rho,\eps,y) &= \sup\left\{
    x :\; \exists b\in \dmn(x,y) \quad
    \Norm{ \mA(b,t,\rho) } < \AMax(t,\rho,y)- \eps\, f(t,y)\;
    \right\} \\[6pt]
B_{\eps}(y) &= \begin{cases}
0 & \mbox{if $\bMax(t,\rho,\eps,y) = \infty$ for all $t,\rho,\eps$} \\
\sup \left( \R \cap \{\bMax(t,\rho,\eps,y):\; t \in \N, \eps > 0\} \right) & \mbox{otherwise.}
\end{cases}
\end{align*}
Here $\AMax(t,\rho,y)$ is the maximal expected number of impressions at time $t$ such that the agent can obtain this number with arbitrarily large $y$-balanced bids. The meaning of $\bMax$ is as follows: if every component of a $y$-balanced vector $b$ is above $\bMax$, the expected number of impressions for time $t$ is guaranteed to be within $\eps\, f(t,y)$ of the best possible. Note that $B_\eps(y)=0$ if $\bMax(t,\rho,\eps,y)$ is infinite for all $t$ and all $\rho$.

\begin{claim}\label{cl:ExPostWMON-rand}
Let $\mA$ be a single-agent allocation rule which satisfies ex-post \WMON.
Then for any $y \geq 1$, any $\eps>0$,  any bid vectors
    $b,b' \in \dmn(B_{\eps}(y),y)$,
any realization $\rho$, and any round $t$,
we have
    $$\Norm{ \mA(b,t,\rho) - \mA(b',t,\rho) } \leq \eps\,g(t,y)$$
\end{claim}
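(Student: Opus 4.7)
The plan is to proceed by strong induction on the round index $t$. The base case is trivial, since $f(0,y)=g(0,y)=G(0,y)=0$ reduces the claim to $\|\mA(b,0,\rho)-\mA(b',0,\rho)\|_1\le 0$. For the inductive step, fix $t\ge 1$ and assume the bound for all $s<t$, for every realization $\rho$ and every pair of bids in $\dmn(B_\eps(y),y)$. The central tool is the \WMON inequality~\eqref{eq:WMON-alloc} applied with a realization $\rho_i$ constructed from $\rho$ as in the proof of Claim~\ref{cl:ExPostWMON-det-skip}: $\rho_i$ coincides with $\rho$ on rounds $s<t$, clicks ad $i$ (only) at round $t$, and has no clicks thereafter. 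By causality the mechanism's observations at rounds $s\le t$ agree under $\rho$ and $\rho_i$, so $\mA(b,s,\rho_i)=\mA(b,s,\rho)$ for every $s\le t$; rounds $s>t$ contribute nothing to the \WMON sum. Hence the sum collapses to a rounds-$s<t$ contribution plus a single round-$t$ term involving only ad $i$.

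Next, introduce a pivot bid vector $\bt\in\dmn(B_\eps(y),y)$ chosen to be $y$-balanced with $\min(\bt)$ much larger than $\max(b)$ and $\max(b')$, and additionally satisfying $\|\mA(\bt,t,\rho)\|_1\ge\AMax(t,\rho,y)-\eta$ for an auxiliary parameter $\eta>0$ that will be sent to zero at the end; such $\bt$ exists by the definition of $\AMax$ as a $\limsup$. Applying \WMON to the pair $(b,\bt)$ with realization $\rho_i$, bounding the rounds-$s<t$ contribution in absolute value by $\|\bt-b\|_\infty\cdot\eps G(t-1,y)$ via H\"older's inequality combined with the inductive hypothesis, and rearranging yields the per-coordinate upper bound
\[
\mA_i(b,t,\rho) - \mA_i(\bt,t,\rho) \;\le\; \frac{\|\bt-b\|_\infty}{\bt_i - b_i}\cdot\eps G(t-1,y),
\]
which for $y$-balanced $\bt$ with $\min(\bt)$ sufficiently large is at most $y\,\eps G(t-1,y)+o(1)$. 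The same inequality holds with $b'$ in place of $b$.

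To obtain matching $L^1$-control, exploit the key consequence of $\min(b)\ge B_\eps(y)$: since $B_\eps(y)\ge\bMax(t,\rho,\eps,y)$ whenever the latter is finite (the $\bMax=\infty$ case is handled by the branch $B_\eps(y)=0$), we have $\|\mA(b,t,\rho)\|_1\ge\AMax(t,\rho,y)-\eps f(t,y)$; combined with $\|\mA(\bt,t,\rho)\|_1\ge\AMax(t,\rho,y)-\eta$, this gives $\sum_i d_i=\|\mA(b,t,\rho)\|_1-\|\mA(\bt,t,\rho)\|_1\ge-\eps f(t,y)-\eta$, where $d_i:=\mA_i(b,t,\rho)-\mA_i(\bt,t,\rho)$. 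The identity $\sum_i|d_i|=2\sum_i[d_i]^+-\sum_i d_i$ then converts the per-coordinate upper bounds on $[d_i]^+$ together with the lower bound on $\sum_i d_i$ into an $L^1$ bound of the form $\|\mA(b,t,\rho)-\mA(\bt,t,\rho)\|_1\le 2my\,\eps G(t-1,y)+\eps f(t,y)+\eta$. The identical bound holds for $b'$ versus $\bt$, and a triangle inequality across $b,\bt,b'$ followed by $\eta\to 0$ yields the claim, the recurrences $f(t,y)=3ymG(t-1,y)+1$ and $g(t,y)=2f(t,y)+2+3ymG(t-1,y)$ being precisely tuned so that the final bound is at most $\eps g(t,y)$.

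The main obstacle is the precise bookkeeping to close the induction at exactly the recurrence for $g$. Using the identity $\sum_i|d_i|=2\sum_i[d_i]^+-\sum_i d_i$ is crucial, since a naive coordinate-by-coordinate triangle inequality would lose an extra factor of $m$; and the $y$-balanced pivot is what causes the factor $ym$ to appear in the dominant term $3ymG(t-1,y)$ of $g(t,y)$. Additional technicalities include handling the sup/$\limsup$ in the definitions of $\AMax$ and $\bMax$ (managed via the auxiliary parameter $\eta$ and possibly an infinitesimal perturbation at the boundary $\min(b)=B_\eps(y)$ since $\bMax$ is a supremum), and ensuring that $\bt_i-b_i$ is bounded below by a quantity of order $\min(\bt)$ uniformly in $i$ so that the ratio in the per-coordinate bound is uniformly controlled by $y$.
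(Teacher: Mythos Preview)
Your overall architecture matches the paper's: induction on $t$, the modified realization $\rho_i$ that clicks only ad $i$ at time $t$, a large $y$-balanced pivot $\bt$, Hölder on the rounds $s<t$ combined with the inductive hypothesis to get a per-coordinate bound. But the way you assemble the pieces at the end does not close the induction with the given $g(t,y)$, and there are two additional gaps.

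\textbf{The main problem: the triangle-inequality route loses a factor of two.} After your per-coordinate bound and the identity $\sum_i|d_i|=2\sum_i[d_i]^{+}-\sum_i d_i$, your bound is $\|\mA(b,t,\rho)-\mA(\bt,t,\rho)\|_1\le 2my\,\eps G(t-1,y)+\eps f(t,y)+\eta$, and likewise for $b'$. Triangle inequality then gives, after $\eta\to 0$,
\[
\|\mA(b,t,\rho)-\mA(b',t,\rho)\|_1 \le 4my\,\eps G(t-1,y)+2\eps f(t,y),
\]
whereas $g(t,y)=2f(t,y)+2+3ymG(t-1,y)$. You would need $myG(t-1,y)\le 2$, which already fails at $t=2$ (since $G(1,y)=g(1,y)=4$). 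So the recursion is \emph{not} ``precisely tuned'' for your route. The paper avoids this loss by comparing $b$ and $b'$ directly rather than through $\bt$: writing $X=\mA(b,t,\rho)$, $X'=\mA(b',t,\rho)$, it uses the per-coordinate bound for \emph{both} $b$ and $b'$ simultaneously to get $\mA_i(\bt,t,\rho)\ge\max(X_i,X'_i)-\tfrac{3y}{2}\eps G(t-1,y)$, hence $\|\max(X,X')\|_1\le \AMax+\eps(1+\tfrac{3ym}{2}G(t-1,y))$, and then applies the identity $\|X-X'\|_1=2\|\max(X,X')\|_1-\|X\|_1-\|X'\|_1$. This is what produces the single $3ymG(t-1,y)$ term instead of your doubled $4myG(t-1,y)$.

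\textbf{Wrong direction for the pivot.} To lower-bound $\sum_i d_i=\|\mA(b,t,\rho)\|_1-\|\mA(\bt,t,\rho)\|_1$ you need an \emph{upper} bound on $\|\mA(\bt,t,\rho)\|_1$, not the lower bound $\ge\AMax-\eta$ that you impose. The paper gets this upper bound by choosing $\bt$ with $\min(\bt)\ge M^*$, where $M^*$ is picked so that $\|\mA(\cdot,t,\rho)\|_1<\AMax+\eps$ on $\dmn(M^*,y)$ (this is where the $\limsup$ definition of $\AMax$ is actually used).

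\textbf{The case $\bMax(t,\rho,\eps,y)=\infty$ is not handled.} Your parenthetical that it is ``handled by the branch $B_\eps(y)=0$'' is not right: $B_\eps(y)=0$ only when $\bMax$ is infinite for \emph{all} $t,\rho$, but you need to deal with $\bMax(t,\rho,\eps,y)=\infty$ for the specific $t,\rho$ at hand, and in that case you cannot conclude $\|\mA(b,t,\rho)\|_1\ge\AMax-\eps f(t,y)$ from $b\in\dmn(B_\eps(y),y)$. The paper treats this as a separate case and shows it is impossible: one picks $b$ with $\|\mA(b,t,\rho)\|_1>\AMax-\tfrac12\eps f(t,y)$ and $\bt$ (arbitrarily far out) with $\|\mA(\bt,t,\rho)\|_1\le\AMax-\eps f(t,y)$; the per-coordinate bound then forces $f(t,y)\le 3ymG(t-1,y)$, contradicting $f(t,y)=3ymG(t-1,y)+1$.
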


\begin{proof}
Fix $\eps>0$ and realization $\rho$.  Let us use induction on $t$. Case $t=0$ is trivial, interpreting
    $\mA(b,0,\rho) = \vec{0}$
for all $\rho,b$. Now assume the claim is true for all times $s < t$. For the sake of contradiction, assume the claim does not hold for time $t$ and some realization $\rho$.

By definition of $\AMax$, there exists a number $M^*$ such that
$$
\sup_{b\in \dmn(M^*,y)}  \Norm{ \mA(b,t,\rho) } < \AMax(t,\rho,y) + \eps.
$$
For each ad $i$, define a new realization $\rho_i$ as follows: it coincides with $\rho$ before time $t$, only $i$ gets clicked at time $t$, and there are no clicks after $t$.

Fix bid vectors
    $b,b' \in \dmn(B_{\eps}(y),y)$.
Pick some bid vector
    $\bt\in \dmn(\widetilde{M},y)$,
where
    $$\widetilde{M} = \max(M^*,\,3y\,\Norm[\infty]{b+b'} ).$$
Let
    $M = \max(\bt)$.

\WMON for realization $\rho_i$, applied to bid vectors $b$ and $\bt$, states the following:
\begin{align}\label{eq:cl:ExPostWMON-rand-WMON}
&(\bt - b\,)^\dag\; \Delta_t(\rho_i)\;
    \left(\;  \mA(\bt,t,\rho) - \mA(b,t,\rho) \;\right) \\
&\quad\quad+
(\bt - b\,)^\dag\;
    \sum_{s=1}^{t-1} \Delta_s(\rho)\;
        \left( \mA(\bt,s,\rho) - \mA(b,s,\rho) \right)
\geq 0.
\end{align}

The first summand in~\eqref{eq:cl:ExPostWMON-rand-WMON} is simply $ (\bt_i-b_i)\,
        \left(\;  \mA_i(\bt,t,\rho) - \mA_i(b,t,\rho) \;\right)$.

By the induction hypothesis, for each time $s<t$ it holds that
$$ (\bt - b\,)^\dag\;
    \Delta_s(\rho)\;
        \left( \mA(\bt,s,\rho) - \mA(b,s,\rho) \right)
    \leq M\, \eps\,g(s,y)
$$
It follows that
$$
(\bt - b\,)^\dag\;
    \sum_{s=1}^{t-1} \Delta_s(\rho)\;
        \left( \mA(\bt,s,\rho) - \mA(b,s,\rho) \right)
        \leq M\, \eps\,\sum_{s=1}^{t-1} g(s,y)
        = M\, \eps\,G(t-1,y)
$$
Plugging this into~\eqref{eq:cl:ExPostWMON-rand-WMON}, we obtain
\begin{align*}
(\bt_i-b_i)\,
        \left(\;  \mA_i(\bt,t,\rho) - \mA_i(b,t,\rho) \;\right)
    \geq - M\, \eps\,G(t-1,y) \nonumber \\
\mA_i(\bt,t,\rho) - \mA_i(b,t,\rho)
    \geq - (\bt_i - b_i)^{-1} M \, \eps\,G(t-1,y).
\end{align*}
We have $\bt_i \geq M/y$ since $\bt$ is $y$-balanced.
Also $b_i \leq M/(3y)$ by our choice of $M$. Therefore
$\bt_i - b_i \geq \tfrac{2M}{3y}$ and
\begin{equation} \label{eq:cl:ExPostWMON-rand-2}
\mA_i(\bt,t,\rho) - \mA_i(b,t,\rho)
     \geq - \tfrac{3y}{2} \, \eps \, G(t-1,y).
\end{equation}

\xhdr{Case 1: $\bMax(t,\rho,\eps,y)<\infty$.}
Denote $X_i = \mA_i(b,t,\rho) $ and $X'_i = \mA_i(b',t,\rho) $. Let $\min(X_i,X'_i)$ be the coordinate-wise minimum of $X_i$ and $X'_i$; define $\max(X_i,X'_i)$ similarly.

In this notation, our goal is to bound $\Norm{X-X'}$ above by $\eps\,g(t,y)$.
Assume $\bt = M\,\vec{1}$ for some $M\geq \widetilde{M}$.
By
\eqref{eq:cl:ExPostWMON-rand-2}, noting that this argument applies to both $b$ and $b'$, we have:
$$\mA_i(\bt,t,\rho)
    \geq \max(X_i,X'_i) -  \tfrac{3y}{2}\, \eps\, G(t-1,y).
$$
Summing this over all ads: 
$$\Norm{ \mA(\bt,t,\rho) }
    \geq \Norm{ \max(X,X') } - \tfrac{3y}{2}\, \eps\,m\, G(t-1,y).
$$
Recall that
    $\Norm{ \mA(\bt,t,\rho) } \leq \AMax(t,\rho,y)+\eps$
by our choice of $M$. Therefore:
$$
\Norm{ \max(X,X') } \leq \AMax(t,\rho,y) +
    \eps\,(1+\tfrac{3y}{2}\,m\, G(t-1,y)).
$$

Note that
\begin{align*}
\Norm{X}+\Norm{X'} &= \Norm{\max(X,X')} + \Norm{\min(X,X')} \\
\Norm{X-X'} &= \Norm{\max(X,X')} - \Norm{\min(X,X')} \\
\Norm{X}+\Norm{X'} + \Norm{X-X'}
 &= 2 \Norm{\max(X,X')}
\end{align*}
Because $\bMax(t,\rho,\eps,y)<\infty$ and
$b,b' \in \dmn(B_\eps(y),y)$,
both $\Norm{X}$ and $\Norm{X'}$ are at least
    $\AMax(t,\rho,y) - \eps\, f(t,y)$.
Therefore:
\begin{align*}
 2\AMax(t,\rho) - 2\eps\, f(t,y) + \Norm{X-X'}
    &\leq 2\, \Norm{\max(X,X')} \\
    &\leq 2\AMax(t,\rho,y) + 2\,\eps\,(1+\tfrac{3y}{2}\, G(t-1,y)).
\end{align*}
It follows that
$$ \Norm{X-X'}
    \leq 2\eps \left(\; 1+f(t,y) + \tfrac{3y}{2}\,m\,G(t-1,y) \;\right)
    = \eps\,g(t,y).
$$
Thus, we have proved the induction step assuming $\bMax(t,\rho,\eps,y)$
is finite.

\xhdr{Case 2: $\bMax(t,\rho,\eps,y)= \infty$.}
This case is impossible: we will arrive at a contradiction.

By definition of $\AMax$, there exists a bid vector
    $b\in \dmn(B_\eps(y),y)$
such that
$$ \Norm{\mA(b,t,\rho)} > \AMax(t,\rho,y) - \tfrac12\, \eps\,f(t,y).
$$
Since $\bMax(t,\rho,\eps,y)= \infty$, we can pick
    $\bt \in \dmn(\widetilde{M},y)$
such that
$$\Norm{ \mA(\bt,t,\rho) }
    \leq \AMax(t,\rho,y) - \eps\,f(t,y)
    \leq  \Norm{ \mA(b,t,\rho)} - \tfrac12\,\eps\,f(t,y).
$$
It follows that
\begin{align*}
\sum_{i=1}^m \left[ \mA_i(b,t,\rho) - \mA_i(\bt,t,\rho) \right]
&\geq \tfrac12\,\eps\,f(t,y) \\
\exists i\quad
\mA_i(b,t,\rho) - \mA_i(\bt,t,\rho) &\geq \tfrac{1}{2m}\,\eps\,f(t,y).
\end{align*}
Using~\eqref{eq:cl:ExPostWMON-rand-2}, for this $i$ we have:
$$ \tfrac{1}{2m}\,\eps\,f(t,y)
    \leq \mA_i(b,t,\rho) - \mA_i(\bt,t,\rho)
    \leq \tfrac{3y}{2} \,\eps\,G(t-1,y).
$$
Thus, $f(t,y) \leq 3ym\,G(t-1,y)$, contradicting the definition of $f$.
\end{proof}

Using Claim~\ref{cl:ExPostWMON-rand}, it is now easy to prove
Theorem~\ref{thm:ExPostWMON}(c).
\begin{proof}[of Theorem~\ref{thm:ExPostWMON}(c)]
For any $\delta>0$, let
\begin{align*}
y &= 2m/\delta \\
\eps &= \tfrac{\delta}{2 m g(T,y)} \\
B &= B_{\eps}(y).
\end{align*}
In our proof we will considering applying $\mA$ to the bid
vector $b^0 = B \vec{1}$ as well as the vectors $b^j$
defined for $j=1,\ldots,m$ by changing the $j^{\mathrm{th}}$
of $b^0$ from $B$ to $yB$. The vectors $b^0,\ldots,b^m$
all belong to $\dmn(B,y)$.

Let $\rho$ be a realization such that $\rho(t,j)=1$ for
all $t,j$, i.e.\ every ad is always clicked.
Since $\mA$ can never allocate more than $T$ impressions, we have
$\sum_{t=1}^T \sum_{i=1}^m \mA_i(b^0,t,\rho) \leq T$.
Hence,  there is at least one $j \in [m]$ such that
\begin{equation} \label{eq:5.1.c.3}
\sum_{t=1}^T \mA_j(b^0,t,\rho) \leq T/m.
\end{equation}
Now, for every round $t$,
we have
\begin{equation} \label{eq:5.1.c.4}
\mA_j(b^j,t,\rho) - \mA_j(b^0,t,\rho) \leq
\Norm{ \mA(b^j,t,\rho) - \mA(b^0,t,\rho) } \leq
\eps \, g(t,y)
= \tfrac{\delta}{2m},
\end{equation}
where the second inequality follows from Claim~\ref{cl:ExPostWMON-rand}.
Summing~\eqref{eq:5.1.c.4} over $t=1,\ldots,T$ and combining
with~\eqref{eq:5.1.c.3},
we deduce that
\[
\sum_{t=1}^T \mA_j(b^j,t,\rho) \leq \left( 1 \tfrac{\delta}{2} \right)
\tfrac{T}{m}.
\]
The optimal allocation for bid vector $b^j$
assigns every impression to ad $j$, achieving
a total value of $yBT$. Instead, the allocation
computed by $\mA$ achieves a total value bounded above
by
       $\left(1 + \tfrac{\delta}{2} \right) \tfrac{yBT}{m} + BT$,
where the first term accounts for impressions allocated
to ad $j$ and the second term accounts for all other
impressions. We have
\[
\left(1 + \tfrac{\delta}{2} \right) \tfrac{yBT}{m} + BT =
\tfrac{yBT}{m} \cdot \left( 1 + \tfrac{\delta}{2} + \tfrac{m}{y} \right)
= yBT \cdot \tfrac{1+\delta}{m}.
\]
Since $\delta>0$ was an arbitrarily small positive constant,
we conclude that the worst-case approximation ratio of
$\mA$ is no better than $1/m$, which is trivially achieved by a
random allocation.
\end{proof}

\section{Multi-parameter MAB mechanisms:
A stochastic CMON allocation rule}
\label{sec:positive}


\newcommand{\ALG}{\mathtt{ALL}}

\newcommand{\Exploit}{\mathtt{Exploit}}
\newcommand{\Wper}{W_0} 

In this section we consider the problem of designing stochastically truthful multi-parameter MAB mechanisms.
As discussed in the introduction, the VCG mechanism cannot be used as it is informationally-infeasible. Additionally,
pricing based mechanisms do not seem to be feasible. The only other technique that is extensively exploited in the literature for multi-parameter domains is using \emph{maximal in distributional range} (\MIDR) allocation rules.
We formalize the limitations of a natural family of \MIDR allocation rules (in which the set of distributions the rule optimizes over is independent of the CTRs) in Section~\ref{subsec:MIDR}, showing that the performance of such rules is no better than randomly selecting an ad to present.
We next discuss some simple approaches to create truthful mechanisms: the first disregards the bids, and the second uses randomization to reduce the problem to a single parameter problem.

The first approach is \emph{bid-independent} allocation rules -- ones that do not depend on the bids.
Among those, we naturally focus on the allocation rule that achieves the best worst-case performance, that rule samples an ad independently and uniformly at random in each round; call it $\Rand$.

A slightly more sophisticated approach randomly reduces the problem to a single parameter problem as follows.
One ad is selected independently for each agent, uniformly at random from this agent's ads.
Then some truthful single-parameter mechanism $\mM$ is run on the selected ads. Call this mechanism $\MosheMech$. This mechanism is truthful (ex-post or stochastically, same as $\mM$) because
for each realization of the selection described above, it is simply a truthful single-parameter mechanism.
The performance of this mechanism is the same as the performance of the trivial $\Rand$ mechanism when there is only one agent.

These two na\"ive approaches have poor performance. For example, for a single agent none performs better than uniformly randomizing over the ads.
We call such a performance {\em trivial}.
This gives rise to the following major open problem.

\vspace{1mm}
{\bf Open Problem}: {\em Design a stochastically truthful mechanism for the multi-parameter MAB problem that achieves optimal approximation. }
\vspace{1mm}

A more modest goal is to design a stochastically truthful mechanism for the multi-parameter MAB problem that achieves {\em non-trivial} performance, even for some ``well-behaved'' subset of inputs.
Unfortunately, it seems that all standard tools fail to achieve even this modest goal.
Below we achieve this by designing a stochastically \CMON allocation rule and then applying the multi-parameter transformation from Section~\ref{sec:transform}. We interpret this result as an evidence that it is not completely hopeless to significantly improve over the trivial approaches.


\subsection{The stochastically CMON allocation rule}
\label{subsec:alloc}

We design a stochastically \CMON allocation rule $\ALG$ whose expected welfare exceeds  that of $\Rand$ on all problem instances with at least two agents, and that of $\MosheMech$  on an important family of problem instances which we characterize below. Structurally $\ALG$ depends on all submitted bids, is provably not \MIDR, and, unlike $\MosheMech$, does not proceed through an explicit reduction to a single-parameter allocation rule. Implementing $\ALG$ as a truthful, information-feasible mechanism requires the full power of our multi-parameter transformation.

All results in this section require all private values to be bounded from above by $1$. We will assume that without further notice.

\xhdr{Recap of notation.} The term ``expected welfare'' refers to expectation over the randomness in the allocation rule and the clicks (for a given vector of CTRs). Let $W(\Rand)$ denote the expected welfare of $\Rand$. Let $A_0=\{ 1 \LDOTS  m\}$ be the set of $m$ ads of all agents.
Recall that $v_j$, $b_j$ and $\mu_j$ be, resp., denote the private value, the submitted bid, and the CTR for ad $j$.
Note that the expected value from each time a given ad $j$ is displayed is $v_j \mu_j$.

\xhdr{Allocation rule $\ALG$ for $\geq 2$ agents.}
Assume there are at least two agents. Define the following allocation rule, call it $\ALG$. It consists of two phases: exploration and exploitation. Exploration lasts for $T_0$ rounds, where $T_0\geq 1$ is fixed and chosen in advance.
In each exploration round an ad is chosen uniformly at random among all ads. Let $n_j$ be the number of clicks for ad $j$ by the end of the exploration phase.
In each round of exploitation $\ALG$ does the following:
\begin{OneLiners}
\item[(L1)] pick each ad $j$ with probability $b_j\, n_j/T_0$,
where $b_j$ is the bid for ad $j$.
\item[(L2)] with the remaining probability pick an ad uniformly at random.
\end{OneLiners}
This completes the specification of $\ALG$. We note that even a single round of exploration suffices for our purposes. Using a small $T_0$ does not affect the expected performance, but results in a (very) high variance.

\xhdr{Discussion.} We design $\ALG$ to ensure that the allocation probabilities depend on CTRs and bids in a simple, linear way. Below we explain why this ``linear dependence" property is useful, and discuss some of the challenges in the analysis of $\ALG$.

Let the \emph{allocation-vector} be a vector $a\in \Re^m$ whose $j$-th component is the expected number of times ad $j$ is allocated by $\ALG$. For a given vector of CTRs, the \emph{allocation-range} is the set of all allocation-vectors that can be realized by $\ALG$. We conjecture that the allocation-range needs to depend on CTRs in order for an allocation rule to satisfy stochastic \CMON and be, in some sense, non-trivial. (In Section~\ref{subsec:MIDR}, we prove a version of this conjecture that is restricted to stochastically \MIDR allocation rules.) The ``linear dependence" property of $\ALG$ ensures that the allocation-range does depend on CTRs.

For example, consider an allocation rule which has an exploration phase of fixed duration, picks the best (estimated) ad based on the clicks received so far, and sticks with this ad from then on. This allocation rule that is ex-post truthful in the single-parameter setting, and is perhaps the most natural candidate for a reasonable, easy-to-analyze allocation rule for our setting. However, the allocation-range of this allocation rule does not depend on CTRs (because the set of possible options for exploitation is fixed: any one ad can be chosen).

Further, the proof technique that we use in the analysis of $\ALG$ essentially requires us, for every given agent, to solve a system of equations where the unknowns are this agent's bids and the parameters are the CTRs and the components of the allocation vector. The allocation probabilities in $\ALG$ are explicitly defined in terms of bids in order to enable us to solve this system of equations in a desirable way; this is another place where the ``linear dependence" property of $\ALG$ is helpful.


\OMIT{
Note that the system of equations which we need to solve for the bids is undetermined if there is only a single agent, essentially because then the allocation probabilities of all his ads in a given round must sum up to $1$. This is why $\ALG$ as stated is not stochastically \CMON for the single-agent case.}

The subtle point in our analysis of $\ALG$ -- or, it seems, in any analysis using the same proof technique -- is that one needs to ensure that the allocation vector is a maximizer of a certain expression, which requires us to prove the positive-definiteness of the corresponding Hessian matrix. The ``linear dependence" property of $\ALG$ enables us to argue about the Hessian matrix in a useful way.

As we discovered, the positive-definiteness of the Hessian should not be taken for granted: indeed, it fails for a number of otherwise promising allocation rules with better performance. We believe that further progress on stochastically \CMON allocation rules would require a more systematic understanding of how changes in the allocation rule propagate through the analysis and affect the Hessian matrix.

\xhdr{Guarantees for $\ALG$ for $\geq 2$ agents.}
A problem instance is called \emph{uniform} if the product $v_j \mu_j$ is the same for all $j$, and \emph{non-uniform} otherwise. Note that for uniform problem instances $\Rand$ is optimal, and in fact all allocation rules without skips have the same expected welfare,
and are all optimal. We will assume that all values-per-click are at most $1$, and that all CTRs are strictly positive.

\OMIT{
Note that instances on which $\MosheMech$ performs really poorly are those with one agent having all ads but one (belonging to a second agent), and only one of his ads has high value while all other ads have arbitrary low values. On the other hand, for such inputs $\ALG$ plays the best ad almost twice as often as $\MosheMech$.}

Note that instances on which $\Rand$ performs very poorly are those where for one ad $j$ the product $v_j \mu_j$ is large while for all other ads this product is very low. On the other hand, for such inputs $\ALG$ plays the best ad significantly more often.

We next present a parameter that aims to quantify the divergence of the instance from uniform and will be used to measure the performance of $\ALG$.
A problem instance is called $\sigma$-skewed, for some $\sigma\in [1,m]$, if it satisfies
\begin{align}\label{eq:positive-skewed}
    (M_2)^2 \geq \sigma (M_1)^2, \quad
    \text{where } M_q = \left( \tfrac{1}{m} \textstyle \sum_{j=1}^m\, (v_j\, \mu_j)^q \right)^{1/q}.
\end{align}
Note that problem instances can be $\sigma$-skewed for any given $\sigma\in [1,m]$. It is $1$-skewed for uniform problem instances, and $m$-skewed when only one ad is good while all other ads have value 0.

Let $\Wper(\ALG)$ be the expected per-round welfare for the exploitation phase of $\ALG$, and let $\Wper(\Rand)$ be the expected per-round welfare for $\Rand$. Note that $\Wper(\Rand)=M_1$.
The properties of $\ALG$ with at least two agents are captured by the next lemma (which is the main technical lemma in this section); its proof is deferred to Appendix~\ref{subsec:pf-lemma-positive}.

\begin{lemma}\label{lm:positive}
With at least two agents, allocation rule $\ALG$ satisfies the following:

\begin{itemize}
\item[(a)] If the CTRs for all ads are strictly positive then $\ALG$ satisfies stochastic \CMON.

\item[(b)]  For $\Wper(\ALG)$ and $\Wper(\Rand)$ as defined above it holds that
\begin{align*}
    \Wper(\ALG)-\Wper(\Rand) = M_2^2 - M_1^2, \quad
    \text{where } M_q = \left( \tfrac{1}{m} \textstyle \sum_{j=1}^m\, (b_j\, \mu_j)^q \right)^{1/q}.
\end{align*}
In particular, $W(\ALG)> W(\Rand)$ for all non-uniform problem instances.
\end{itemize}
\end{lemma}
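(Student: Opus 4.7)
The plan is to derive a closed form for the expected allocation vector $a(b)$ and use it for both parts. Let $a_j(b)$ denote the expected total number of clicks on ad $j$ over the $T$ rounds, where expectation is taken over exploration clicks and the internal randomness of $\ALG$. Exploration contributes $T_0\mu_j/m$, independent of bids. In each exploitation round, using $\E[n_j] = T_0\mu_j/m$, the probability of picking ad $j$ becomes
\[
p_j(b) \;=\; \tfrac{1}{m}\, b_j\mu_j \;+\; \tfrac{1}{m}\Bigl(1 - \tfrac{1}{m}\textstyle\sum_k b_k\mu_k\Bigr),
\]
so $a_j(b) = T_0\mu_j/m + (T-T_0)\mu_j\, p_j(b)$. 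Part~(b) is then an immediate computation: under truthful bidding ($b_j = v_j$) the per-round welfare during exploitation is $\Wper(\ALG) = \sum_j v_j\mu_j\, p_j(b) = M_2^2 + M_1 - M_1^2$, while $\Wper(\Rand) = M_1$, so the difference equals $M_2^2 - M_1^2$. This is non-negative by the power-mean (Cauchy--Schwarz) inequality, and strictly positive precisely when the products $v_j\mu_j$ are not all equal, i.e., on non-uniform instances.

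For part~(a), view the setting as a dot-product valuation domain in which agent $i$'s private value is $(v_j)_{j \in S_i}$ and his valuation for a $\mu$-click outcome $C$ is $\sum_{j\in S_i} v_j C_j$. By Theorem~\ref{thm:CMON-characterization}, stochastic \CMON for $\ALG$ is equivalent to cyclic monotonicity of the smooth map $b_i \mapsto (a_j(b_i, b_{-i}))_{j\in S_i}$. Since $D_i$ is star-convex at $0$ (rescalable types), the map is the gradient of a convex potential---and hence cyclically monotone---whenever its Jacobian is symmetric and positive semidefinite. From the formula above, for $j, k \in S_i$,
\[
\frac{\partial a_j}{\partial b_k} \;=\; \alpha\bigl[\mu_j^2\,\delta_{jk} - \tfrac{1}{m}\mu_j\mu_k\bigr], \qquad \alpha := \tfrac{T-T_0}{m},
\]
since the exploration term and the $b_{-i}$-dependent pieces contribute nothing under differentiation with respect to $b_i$. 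Symmetry is immediate by inspection.

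For PSD, take $x \in \R^{S_i}$ and substitute $y_j := \mu_j x_j$; the quadratic form becomes
\[
x^\top J x \;=\; \alpha\Bigl[\,\textstyle\sum_{j\in S_i} y_j^2 \;-\; \tfrac{1}{m}\bigl(\sum_{j\in S_i} y_j\bigr)^2\Bigr],
\]
and Cauchy--Schwarz gives $(\sum_{j\in S_i} y_j)^2 \leq |S_i|\sum_{j\in S_i} y_j^2$, so $x^\top J x \geq \alpha(1 - |S_i|/m)\sum_j y_j^2$. This is where the two-agent assumption enters crucially: with at least two agents and disjoint nonempty ad sets, $|S_i| \leq m-1 < m$, so the bound is non-negative and $J$ is PSD. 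I expect this PSD step to be the main obstacle; the ``linear dependence'' design of rule~L1 is what produces the clean rank-one-perturbation Jacobian $\alpha(D_\mu^2 - \tfrac{1}{m}\mu\mu^\top)$ for which PSD is verifiable, and naive alternatives fail here even when they appear otherwise reasonable. For a single agent ($|S_i| = m$) the bound degenerates to zero and the argument breaks, which is consistent with the design of $\ALG$ requiring at least two agents.
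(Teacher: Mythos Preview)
Your proof is correct. Part~(b) is the same direct computation the paper carries out (they partition by the events $E_1,\dots,E_m,E_u$; you sum per ad---same identity).

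For part~(a) you take a genuinely different route. The paper uses the affine-maximizer characterization (its Lemma~\ref{lm:affine-maximizer-general}): it \emph{inverts} the per-round click map $b\mapsto \vec{C}_t(b)$, integrates the inverse to build an explicit potential $G(p,\mu)$ on the \emph{outcome} space, and then checks that the Hessian of $G$ is positive definite so that $\vec{C}_t(b)$ is the unique global maximizer in~\eqref{eq:my-affine-t}. You instead work on the \emph{bid} side: you compute the Jacobian of the forward map $b_i\mapsto (a_j(b))_{j\in S_i}$, observe it is the constant symmetric matrix $\alpha\bigl(\mathrm{diag}(\mu_j^2)-\tfrac{1}{m}\mu\mu^\top\bigr)$, and verify it is PSD via Cauchy--Schwarz. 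The two matrices are essentially inverses of one another (your $J$ is the Jacobian of $b\mapsto p$, the paper's $H$ is the Jacobian of $p\mapsto b$), so the positive-definiteness checks are equivalent; but your argument avoids the explicit inversion and integration and is shorter. The paper's route buys you the explicit potential $G$, which makes the affine-maximizer structure visible; yours buys brevity. Two small remarks: the step ``symmetric PSD Jacobian $\Rightarrow$ gradient of a convex function $\Rightarrow$ cyclically monotone'' really uses that the domain is \emph{convex} (not just star-convex), which is fine here since $D_i=[0,1]^{|S_i|}$; and your closing comment that the argument ``breaks'' for a single agent is a bit too pessimistic---the quadratic form is still nonnegative when $|S_i|=m$ (PSD, not PD), and PSD already suffices for \CMON---but this does not affect the proof of the lemma as stated.
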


The allocation rule $\ALG$ does not have the property that scaling all bids by a common factor scales the expected welfare by the same factor; therefore it is not \MIDR (see Section~\ref{subsec:MIDR} for the definition of \MIDR, as it applies to our setting).

\xhdr{Reduction to the single-agent case.}
For a single agent, we define our allocation rule $\ALG$ as follows: we simulate a run of $\ALG$ with a single round of exploration and two agents, where the second agent is a dummy agent with a single ad. The dummy agent submits a bid of zero for his ad, and we fix its CTR to $\tfrac12$ (any CTR works). This completes the specification of $\ALG$.

Denote the resulting two-agent allocation rule by $\ALG^*$. The single-agent allocation rule satisfies $\CMON$ because so does $\ALG^*$. Since the dummy agent does not contribute welfare (because of the zero bid), we have
    $\Wper(\ALG)=\Wper(\ALG^*)$.
Applying Lemma~\ref{lm:positive}(a) to $\ALG^*$, we see that
\begin{align}\label{eq:ALL-welfare-single}
\Wper(\ALG) = M^*_1+ (M^*_2)^2-(M^*_1)^2,
\text{ where }
M^*_q = \left( \tfrac{1}{m+1} \textstyle \sum_{j=1}^m\, (b_j\, \mu_j)^q \right)^{1/q}.
\end{align}

We summarize the useful properties of $\ALG$ in the following lemma:

\begin{lemma}\label{lm:positive-single}
Consider the case of a single agent; assume $\mu_j>0$ for all ads $j$. Then $\ALG$ satisfies stochastic \CMON, and its welfare in exploitation rounds satisfies \eqref{eq:ALL-welfare-single}. In the one exploration round, $\ALG$ obtains welfare $\tfrac{m}{m+1}\Wper(\Rand)$.

\end{lemma}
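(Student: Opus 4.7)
The plan is to derive each conclusion by regarding the single-agent rule as the two-agent rule $\ALG^*$ restricted to the case where a dummy agent (with one ad, CTR $\tfrac{1}{2}$) always bids zero, and then invoking the already-established Lemma~\ref{lm:positive} for $\ALG^*$.

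For stochastic \CMON: because all real ads have $\mu_j > 0$ by hypothesis and the dummy's CTR is $\tfrac{1}{2} > 0$, Lemma~\ref{lm:positive}(a) yields stochastic \CMON for $\ALG^*$ in the two-agent domain. The \CMON inequality~\eqref{eq:CMON} for the single real agent of the induced instance is exactly the corresponding inequality for that real agent in $\ALG^*$ with $\bid_{-i}$ pinned at the dummy's zero bid---since \CMON quantifies only over one agent's own deviations while holding other bids fixed---so \CMON transfers to $\ALG$. For the exploitation formula, the dummy contributes zero welfare (its value under truthful bidding equals its zero bid), hence $\Wper(\ALG) = \Wper(\ALG^*)$. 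Applying Lemma~\ref{lm:positive}(b) to $\ALG^*$ on its $m+1$ ads, the zero-bid dummy drops out of the sums defining $M_q$ and what remains is precisely the $M^*_q$ of \eqref{eq:ALL-welfare-single}; combined with $\Wper(\Rand^*) = M^*_1$, this yields $\Wper(\ALG) = M^*_1 + (M^*_2)^2 - (M^*_1)^2$.

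For the exploration-round welfare: the single exploration round of $\ALG^*$ selects one of $m+1$ ads uniformly at random. With probability $1/(m+1)$ the dummy is selected and contributes zero welfare; with probability $m/(m+1)$ one of the $m$ real ads is selected uniformly, contributing $\tfrac{1}{m}\sum_{j=1}^m v_j\mu_j = \Wper(\Rand)$ in expectation. Summing gives expected exploration welfare $\tfrac{m}{m+1}\Wper(\Rand)$.

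There is no real obstacle: all three parts reduce directly to Lemma~\ref{lm:positive}. The only conceptually subtle point is the observation that \CMON is an agent-wise property---it fixes the other agents' bids and ranges over one agent's own deviations---so fixing the dummy's bid at zero simply restricts the family of conditions to be checked, leaving the real agent's \CMON inherited intact from the two-agent rule.
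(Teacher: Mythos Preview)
Your proof is correct and follows essentially the same approach as the paper: the paper derives all three claims in the paragraph immediately preceding the lemma by reducing to the two-agent rule $\ALG^*$ with a zero-bid dummy and invoking Lemma~\ref{lm:positive}, exactly as you do. Your treatment is in fact slightly more explicit than the paper's (you spell out why \CMON transfers agent-wise and compute the exploration-round welfare directly), but the route is the same.
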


\xhdr{Main provable guarantee.} Let $\mM_\delta$ be the mechanism obtained by applying Theorem~\ref{thm:multiParam} to $\ALG$ with parameter $\delta\in (0,1)$. The main result of this section follows.

\begin{theorem}\label{thm:positive}
Consider a multi-parameter MAB domain with $v_j\leq 1$ and $\mu_j>0$ for every ad $j$. Then mechanism $\mM_\delta$ is stochastically truthful, for every $\delta\in (0,1)$.

Consider $\sigma$-skewed problem instances, 
and assume  $\max_{j\in A_0} v_j \mu_j>\eps>0$.
There exists $\delta\in (0,1)$ such that mechanism $\mM_\delta$ satisfies the following:

\begin{itemize}
\item[(a)] $W(\mM)> W(\Rand)$ on all problem instances with at least two agents, as long as
    $\sigma>1$.

\item[(b)] $W(\mM)> W(\Rand) = W(\MosheMech)$ on all problem instances with a single agent with $m$ ads,
 as long as
    $\sigma>1+\tfrac{m+1}{m\eps}+\tfrac{m+1}{\eps(T-1)}$.

\item[(c)] Suppose there exists an agent with $k>m/2$ ads; w.l.o.g. assume this is agent $1$. Then $W(\mM)> W(\MosheMech)$ on all problem instances such that
        $\sigma>1+ \tfrac{m(m-k)}{k\eps} $ 
    when for all agents $i>1$ all private values are $0$.
    \footnote{One can also derive a version of this result where the private values for all agents $i>1$ are smaller than $\delta$, for some $\delta\ll \eps$. We omit the easy  details.} \footnote{Note that the instances considered in this result are generalizing the instances we have discussed before. There are instances in which one agent have all but one ad, and only one of his ads has positive value, while all the rest of the ads (his and others) have value $0$.}
\end{itemize}
\end{theorem}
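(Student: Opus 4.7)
The plan has two parts: get stochastic truthfulness for free from the previous machinery, and then prove the welfare comparisons at the level of the bare allocation rule $\ALG$ before lifting them to $\mM_\delta$.

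Stochastic truthfulness of $\mM_\delta$ for every $\delta\in(0,1)$ is immediate: Lemmas~\ref{lm:positive}(a) and~\ref{lm:positive-single} (applicable because all CTRs are strictly positive) give that $\ALG$ is stochastically \CMON, and Theorem~\ref{thm:multiParam}(b) then yields that $\mM_\delta=(\tA,\tP)$ is stochastically truthful. So the only real content is (a)--(c).

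The welfare strategy is a two-step reduction. First I would establish each desired strict inequality at the level of $\ALG(v)$ (truthful bids). Then I would transfer it to $\mM_\delta$ using Theorem~\ref{thm:multiParam}(c): with probability at least $(1-\delta)^n\ge 1-n\delta$, the transformation leaves the reported bids unrescaled and so $\mM_\delta$ allocates exactly as $\ALG(v)$; on the remaining event the expected welfare is non-negative, giving $W(\mM_\delta)\ge(1-n\delta)W(\ALG)$. Hence once $W(\ALG)$ strictly exceeds the relevant benchmark by some positive margin, choosing $\delta$ sufficiently small (depending on $n$, $\sigma$, $\eps$, $m$, $T$) preserves the strict inequality for $\mM_\delta$. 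For part (a) this step is essentially free: by Lemma~\ref{lm:positive}(b), writing $W(\ALG)=T_0\Wper(\Rand)+(T-T_0)\Wper(\ALG)$ gives $W(\ALG)-W(\Rand)=(T-T_0)(M_2^2-M_1^2)\ge(T-T_0)(\sigma-1)M_1^2$, strictly positive whenever $\sigma>1$.

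Parts (b) and (c) require more careful accounting through Lemma~\ref{lm:positive-single}. In the single-agent case, both $\Rand$ and $\MosheMech$ yield per-round welfare $M_1=\tfrac1m\sum v_j\mu_j$, so $W(\Rand)=W(\MosheMech)=TM_1$. Since $\ALG$ is defined by simulating the two-agent $\ALG^*$ with a dummy ad of zero bid, the rescaled moments satisfy $M_1^*=\tfrac{m}{m+1}M_1$ and $(M_2^*)^2=\tfrac{m}{m+1}M_2^2$. Each exploitation round of $\ALG$ thus contributes $\tfrac{m}{m+1}M_1+\tfrac{m}{m+1}M_2^2-(\tfrac{m}{m+1})^2M_1^2$, while the one exploration round contributes $\tfrac{m}{m+1}M_1$. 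Subtracting $W(\Rand)=TM_1$, dividing by $M_1^2$, and invoking $M_1>\eps/m$ from the assumption $\max v_j\mu_j>\eps$, yields a sufficient condition on $\sigma$ matching the bound in (b). For part (c), when agents $i>1$ have zero values, $\MosheMech$ reduces to picking uniformly among agent~$1$'s $k$ ads, so $W(\MosheMech)=T\cdot\tfrac{1}{k}\sum_{j\in A_1}v_j\mu_j=\tfrac{m}{k}TM_1$, and the same arithmetic produces the threshold $\sigma>1+\tfrac{m(m-k)}{k\eps}$.

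The hard part is not the transfer from $\ALG$ to $\mM_\delta$ (a one-line $(1-n\delta)$-factor argument) but the bookkeeping in (b) and (c): one must expand the per-round welfares in the dummy-augmented model, manage the $\tfrac{m}{m+1}$ factors that creep in, compare against the right benchmark (different for (b) vs.\ (c) because of what $\MosheMech$ degenerates to), and then convert the algebraic inequality into a clean skewedness condition using only $\eps$, $m$, $k$, and $T$. Once those thresholds are in hand, $\delta$ is chosen last, small enough that the $(1-n\delta)$-loss from the multi-parameter transformation does not close the margin.
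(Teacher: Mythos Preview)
Your proposal is correct and follows essentially the same route as the paper: stochastic truthfulness via Lemma~\ref{lm:positive}(a), Lemma~\ref{lm:positive-single}, and Theorem~\ref{thm:multiParam}(b); then a welfare comparison established first for $\ALG$ using the moment formulas from Lemma~\ref{lm:positive}(b) and~\eqref{eq:ALL-welfare-single}, and finally transferred to $\mM_\delta$ by choosing $\delta$ small enough that the multiplicative loss from Theorem~\ref{thm:multiParam}(c) does not erase the margin. The paper in fact omits most of the bookkeeping for (b) and (c), so your explicit handling of the $\tfrac{m}{m+1}$ factors via $M_1^*,M_2^*$ is exactly the ``simple computation'' it defers.

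One small correction in part~(c): your equality $W(\MosheMech)=\tfrac{m}{k}TM_1$ tacitly assumes the underlying single-parameter mechanism always allocates to agent~$1$'s selected ad; in general you only get the inequality $W(\MosheMech)\le \tfrac{m}{k}TM_1$ (which is the direction you need, and is what the paper uses). This does not affect the argument.
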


The theorem follows from Lemma~\ref{lm:positive}, Lemma~\ref{lm:positive-single} and
Theorem~\ref{thm:multiParam} via straightforward computations, some of which we omit from this version. Recall that for each $\delta>0$ we have
    $W(\mM_\delta) > (1-\delta)\, W(\ALG)$.

\begin{proof}[Theorem~\ref{thm:positive}(a)]
Assume $M_2 > (1+\eps) M_1$ and $\max_{j\in A_0} b_j \mu_j>\eps$ for some $\eps>0$. Then, using the notation of Lemma~\ref{lm:positive}(b), we have $M_1\geq \eps/m$, and therefore
$$ \Wper(\ALG)-\Wper(\Rand)
    \geq M_1^2\, ((1+\eps)^2-1) > M_1\, \tfrac{2\eps^2}{m}.
$$

\noindent Recall that $T_0$ is the duration of exploration in $\ALG$, and $T$ is the time horizon. Then:
\begin{align*}
W(\Rand)
    &= T\, \Wper(\Rand)  = T\, M_1\\
W(\ALG)
    &= T_0\,\Wper(\Rand) + (T-T_0)\, \Wper(\ALG) \\
     &= W(\Rand) + (T-T_0)\left( \Wper(\ALG)-\Wper(\Rand) \right) \\
    &> W(\Rand) + \gamma\, W(\Rand), \;
    \text{ where }  \gamma=  \tfrac{2\eps^2 (T-T_0)}{mT}  \\
W(\mM)
    &> (1-\eta)\, W(\ALG) > (1-\eta)(1+\gamma)\, W(\Rand).
\end{align*}
Thus, to ensure that $W(\mM)>W(\Rand)$, it suffices to take $\eta<1-\tfrac{1}{1+\gamma}$.
\end{proof}

\begin{proof}[Proof Sketch of Theorem~\ref{thm:positive}(bc)]
For part (b), recall that
    $\Wper(\Rand)=\Wper(\MosheMech) = \tfrac{1}{m}\sum_{j=1}^m\, b_j\, \mu_j$.
With a simple computation which we omit from this version, one derives that
    $W(\mA)> W(\Rand)$.
We prove $W(\mM)> W(\Rand)$ using a computation similar to the one in the proof of part (a), we omit the details.

For part (c), note that $\Wper(\Rand) = M_1$ and (under the assumptions in Theorem~\ref{thm:positive}(c)),
    $W(\MosheMech)\leq \tfrac{1}{k} M_1$.
Again, using a simple computation one can show that
    $W(\mA)>W(\MosheMech)$,
and then pick a sufficiently small $\delta$ as in the proof of part (a).
\end{proof}

\subsection{Proof of the main technical lemma (Lemma~\ref{lm:positive})}
\label{subsec:pf-lemma-positive}

\newcommand{\Eu}{E_{\mathtt{u}}}
\newcommand{\PP}{\mathcal{P}}

Let us set up some notation. Consider an exploitation round in the execution of $\ALG$. For each ad $j$, let $E_j$ be the event that ad $j$ is chosen in line (L1) of the algorithm's specification. Let $\Eu$ be the remaining event in line (L2) when the ad is chosen uniformly at random. Denote $x_j = \Pr[E_j]$, and note that for each ad $j$,
    $$x_j \triangleq \Pr[E_j] = b_j\, \E[n_j]/T_0 = \tfrac{1}{m}\, b_j \mu_j.$$

\begin{proof}[of Lemma~\ref{lm:positive}(b)]
Consider a round in the exploitation phase of $\ALG$. Partition this round into events $\PP = \{ E_1 \LDOTS E_m;\Eu \}$. For each event $E\in \PP$ in this partition, let $\Wper(E)$ be the expected per-round welfare of $\ALG$ from this event, so that
    $\Wper(\ALG) = \sum_{E\in \PP} \Wper(E)$.
Note that
    $\Wper(\Eu) = \Pr[\Eu]\, \Wper(\Rand)$.
Further,
    $\Wper(E_j) = b_j \mu_j \Pr[E_j] = m x_j^2$
for each ad $i$.

It is easy to see that
    $\Wper(\Rand) = \tfrac{1}{m} \sum_j b_j \mu_j = \sum_j x_j $.
It follows that
\begin{align*}
\Wper(\ALG) - \Wper(\Rand)
    &= \textstyle \sum_{E\in \PP} \Wper(E) - \Pr[E]\, \Wper(\Rand) \\
    &=  \textstyle \sum_j  \Wper(E_j) - \sum_j \Pr[E_j]\, \Wper(\Rand) \\
    &=  \textstyle \left( m \sum_j x_j^2\right) -
        \left(\sum_j x_j \right)^2 \\
    &= M_2^2 - M_1^2 . \qedhere
\end{align*}
\end{proof}

For Lemma~\ref{lm:positive}(a), we rely on the following characterization of \CMON from prior work:

\begin{lemma}\label{lm:affine-maximizer-general}
Consider a function $f: S \to \Re^k$, where $S\subset \Re^k$. Let $f(S)\subset \Re^k$ be the image of $f$. Then $f$ is \CMON if and only if it is an affine maximizer, i.e.
$$ f(x) = \argmax_{y\in f(S)} \left[ x\cdot  y - g(y) \right] \quad
\text{for some function $g: f(S) \to \Re$.}
$$
\end{lemma}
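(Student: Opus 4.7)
The direction affine-maximizer $\Rightarrow$ \CMON is immediate. Assuming $f(x) \in \argmax_{y \in f(S)}[x \cdot y - g(y)]$, the maximizer property applied at each $x_j$ in a cyclic sequence $x_0, x_1, \ldots, x_n = x_0$ yields $x_j \cdot f(x_j) - g(f(x_j)) \geq x_j \cdot f(x_{j-1}) - g(f(x_{j-1}))$. Summing over $j$ around the cycle, the $g$-terms telescope to zero, leaving $\sum_j (x_j - x_{j-1}) \cdot f(x_j) \geq 0$, which is precisely the \CMON inequality in the dot-product setting.

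For the converse, the plan is to construct $g$ via a Rockafellar-style potential. Fix any reference $x^* \in S$ and set $y^* = f(x^*)$. For each $x \in S$, define
$$
h(x) \;=\; \sup\left\{ \sum_{j=0}^{n-1} x_j \cdot \left(f(x_{j+1}) - f(x_j)\right) \;:\; n \geq 1,\; x_0 = x^*,\; x_n = x,\; x_1, \ldots, x_{n-1} \in S \right\}.
$$
The length-one chain $(x^*, x)$ certifies $h(x) \geq x^* \cdot (f(x) - y^*) > -\infty$. For the upper bound, append $x_{n+1} = x^*$ to any candidate chain to form a cycle of length $n+1$; applying \CMON to this cycle and simplifying the resulting cyclic sum yields chain-sum $\leq x \cdot (f(x) - y^*)$, so $h(x) < \infty$. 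Crucially, $h(x) = h(x')$ whenever $f(x) = f(x')$: appending $x'$ to any chain ending at $x$ adds $x \cdot (f(x') - f(x)) = 0$, so $h(x') \geq h(x)$, and symmetrically. Thus $g(y) := h(x)$ for any $x \in f^{-1}(y)$ is a well-defined function $g: f(S) \to \Re$.

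To verify the affine-maximizer property, equivalently $g(y) \geq g(f(x)) + x \cdot (y - f(x))$ for all $x \in S$ and $y \in f(S)$: fix such $x, y$ and set $z = f(x)$. Given $\eps > 0$, pick a chain $x^* = x_0, \ldots, x_n = x$ ending at this specific $x$ (by well-definedness $g(z) = h(x)$) whose sum exceeds $g(z) - \eps$. Choose any $x_{n+1} \in f^{-1}(y)$ and extend the chain by one step; the extended chain is admissible in the supremum defining $g(y)$, and its sum equals the original sum plus $x \cdot (y - z)$. Hence $g(y) \geq g(z) - \eps + x \cdot (y - z)$, and letting $\eps \to 0$ yields the claim.

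The main obstacle is spotting the right construction of $g$. Once the Rockafellar-style definition is in place, \CMON is invoked in only a single step (closing a chain into a cycle to bound the supremum), the well-definedness on $f(S)$ follows from the trivial cancellation $f(x') - f(x) = 0$, and the subgradient-type inequality falls out by a one-step chain extension. The only piece requiring care is the bookkeeping in the cyclic expansion that converts the raw \CMON inequality into the bound $h(x) \leq x \cdot (f(x) - y^*)$.
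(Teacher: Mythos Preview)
Your proof is correct; it is the standard Rochet/Rockafellar potential construction, and all the steps check out. The forward direction is immediate, and for the converse your chain supremum $h$ is well-defined (finiteness from closing the chain into a cycle and invoking \CMON, constancy on fibers $f^{-1}(y)$ from the zero-increment extension), and the one-step extension argument yields exactly the subgradient inequality $g(y) \geq g(f(x)) + x \cdot (y - f(x))$, which is equivalent to $f(x) \in \argmax_{y \in f(S)}[x \cdot y - g(y)]$.

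Note, however, that the paper does not actually prove this lemma: it is introduced with the phrase ``we rely on the following characterization of \CMON from prior work'' and is stated without proof, as a citation of Rochet's theorem specialized to the finite-dimensional dot-product setting. So there is no paper proof to compare against; your argument simply supplies the (standard) details that the paper omits. One cosmetic point: the lemma writes $f(x) = \argmax$ as if the maximizer were unique, whereas your construction (like Rochet's) only yields $f(x) \in \argmax$; this is the usual abuse of notation and does not affect how the lemma is applied in Section~\ref{subsec:pf-lemma-positive}.
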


\begin{proof}[Proof of Lemma~\ref{lm:positive}(a)]
Assume that there are at least two agents, and all CTRs are strictly positive. Without loss of generality, let us focus on agent $1$. We will use the following notation. Let
    $A = \{ 1 \LDOTS k\}$
be the set of ads submitted by agent $1$. Here $k$ is the number of ads submitted by agent 1; note that $k<m$. Let
    $b = (b_1 \LDOTS b_k)$
be the vector of bids for agent $1$, where $b_j$ is the bid on ad $j$. Let $B = [0,1]^k$ be the set of all possible bid vectors for agent 1. Let $\mu = (\mu_1 \LDOTS \mu_k)$ be the vector of CTRs for agent 1.  We will use both $i$ and $j$ to index ads.

Throughout the proof, let us keep the bids of all other agents fixed. Let $C_{i,t}(b)$ be the expected number of clicks that ad $i$ receives in round $t$ of $\ALG$, given the bid vector $b$, where the expectation is taken over all realizations of the clicks and over the randomness in the algorithm.%
\footnote{Here it is more convenient to use a slightly different notation for click-vectors, compared to Section~\ref{sec:MAB-mech-problem}.}

 Let
    $\vec{C}_t(b) = \left( C_{1,t}(b) \LDOTS C_{k,t}(b) \right)$
be the round-$t$ vector over the ads of agent 1, and let
    $\vec{C}(b) = \sum_t \vec{C}_{i,t}(b)$
be the vector whose $i$-th component is the total expected number of clicks for ad $i$.

We need to prove that the function $\vec{C}:B\to \Re^k$ satisfies \CMON. It suffices to prove that \CMON is satisfied for each round $t$ separately, i.e. that it is satisfied for each function $\vec{C}_t$. This is obvious if $t$ is an exploration round. In the rest of the proof we fix $t$ to be an exploitation round.

By Lemma~\ref{lm:affine-maximizer-general}, it suffices to prove that $\vec{C}_t(b)$ is an affine maximizer, i.e. that
\begin{align}\label{eq:my-affine-t}
\vec{C}_t(b) = \argmax_{p\in \vec{C}_t(B)}\;
    \sum_{j\in A} b_i\, p_i - G(p,\mu)
\end{align}
for some function
    $G(p,\mu): \vec{C}_t(B) \times [0,1]^k \to \Re$,
where $\vec{C}_t(B)\subset [0,1]^k$ is the image of $\vec{C}_t$. Crucially, the function $G$ cannot depend on $b$.~
\footnote{Note that $G$ \emph{can} depend on the CTRs, even though the mechanism does not know them. This is because $G$ is only used for the analysis -- to prove \CMON, and it is not actually used in the mechanism.}

Denote $p^* = \vec{C}_t(b)$. If $p^*$ is an interior point of $\vec{C}_t(B)$ and function $G$ is differentiable, then \eqref{eq:my-affine-t} implies the following:
\begin{align}\label{eq:my-affine-t-derive}
\frac{\partial}{\partial p_i}\, G(p^*,\mu) = b_i \quad
    \text{for each ad $i$, bid vector $b$ and CTR vector $\mu$}.
\end{align}

\noindent We will construct a function $G(p,\mu)$ so that it satisfies~\eqref{eq:my-affine-t-derive}.

Here and on, $i\in A$ denotes an arbitrary ad of agent 1. Recall that
    $x_i = \Pr[E_i] = \tfrac{1}{m} b_i\mu_i$.
Thus:
\begin{align*}
\Pr[\Eu]
    &= \textstyle  1- \sum_{j\in A_0} \Pr[E_j] = 1- \sum_{j\in A_0} x_j \\
C_{i,t}(b)
    &= \textstyle  \mu_i\left( \Pr[E_i] + \tfrac{1}{m} \Pr[\Eu] \right)
    = \mu_i\left( x_i + \tfrac{1}{m} - \tfrac{1}{m} \sum_{j\in A_0} x_j \right).
\end{align*}
Recalling the notation $p^* = \vec{C}_t(b)$ and solving for $x_i$, we obtain
\begin{align*}
p^*_i/\mu_i &= x_i + \tfrac{1}{m} - \tfrac{1}{m} \textstyle \sum_{j \in A_0} x_j \\
\textstyle \sum_{j\in A} p^*_j / \mu_j
    &= \textstyle \sum_{j\in A} x_j + \tfrac{k}{m} -  \tfrac{k}{m} \sum_{j\in A_0} x_j \\
    &=  (\tfrac{k}{m} -Y) + (1-\tfrac{k}{m})  \textstyle \sum_{j\in A_0} x_j, \quad
        \text{ where } Y = \sum_{j\in A_0 \setminus A} x_j. \\
p^*_i/\mu_i &= \textstyle
    x_i - \alpha \sum_{j\in A} p^*_j / \mu_j +\beta.
\end{align*}
where
    $\alpha = \tfrac{1}{m-k}$
and
    $\beta = \tfrac{1}{m} - \alpha(Y-\tfrac{k}{m})$.
It follows that
\begin{align} \label{eq:positive-pStar}
b_i = \frac{m}{\mu_i}\; x_i
    = p^*_i\, \frac{m}{\mu_i^2} + \sum_{j\in A} p^*_j\, \frac{\alpha m}{\mu_i \mu_j} - \frac{\beta m}{\mu_i}.
\end{align}
Denote the RHS of~\eqref{eq:positive-pStar} by $f_i(p^*,\mu)$. We have proved that
    $b_i = f_i(p^*,\mu)$
for each ad $i$. Thus to obtain~\eqref{eq:my-affine-t-derive} it suffices to pick $G(p,\mu)$ so that it satisfies
\begin{align}\label{eq:positive-G-f}
    \frac{\partial}{\partial p_i} G(p,\mu) = f_i(p,\mu) \quad
    \text{for each $i\in A$}.
\end{align}
Integrating $f_i(p^*,\mu)$ over $p_i$, for each ad $i$, and combining the resulting expressions, we obtain
\begin{align}\label{eq:positive-G}
G(p,\mu) = -\sum_{i\in A}\, p_i\, \frac{m \beta}{\mu_i} + \frac{m}{2}\sum_{i\in A}\, p_i^2\, \frac{1+\alpha}{\mu_i^2}
    + \sum_{j\in A\setminus \{i\}}\, p_i p_j \, \frac{m\alpha}{\mu_i\mu_j}.
\end{align}
It is easy to check that this $G$ satisfies~\eqref{eq:positive-G-f}, which in turn implies~\eqref{eq:my-affine-t-derive}.~
\footnote{Write
    $f_i(p,\mu) = \phi_i + \sum_{j\in A} p_j\, \gamma_{ij}$
for some numbers $\phi_i$ and $\gamma_{ij}$. Then a function $G(p,\mu)$ satisfying~\eqref{eq:positive-G-f} exists if and only if
    $\gamma_{ij} = \gamma_{ji}$ for all $i\neq j$.}
It follows that for this $G$, $p=p^*$ is a critical point in~\eqref{eq:my-affine-t}.
From here on we will use the $G$ as defined in~\eqref{eq:positive-G}.

We claim that the critical point $p=p^*$ is in fact a local maximum in~\eqref{eq:my-affine-t}. Equivalently, we claim that $p=p^*$ is a local minimum of the function
$$\lambda(p) = G(p,\mu)-p\cdot b: \Re^k \to \Re.$$
For that, it suffices to prove that the Hessian matrix $H$ of $\lambda(\cdot)$,
defined by
$$ H_{ij}
    = \frac{\partial}{\partial p_i\, \partial p_j} \lambda(p)
    = \frac{\partial}{\partial p_i\, \partial p_j} G(p,\mu),
$$
is positive-definite for $p=p^*$ . Note that for any $p\in \Re^k$ it holds that
\begin{align}\label{eq:positive-Hessian}
 H_{ij} =
 \begin{cases}
        \tau \rho_i^2, & i=j, \\
        \rho_i \rho_j, & i\neq j,
    \end{cases}
\end{align}
where
    $\rho_i = \frac{\sqrt{\alpha m}}{\mu_i}$
for each $i\in A$, and
    $\tau = \tfrac{1+\alpha}{\alpha} = 1+m-k\geq 2$.
By Claim~\ref{cl:positive-definite}, such matrix is positive-definite.

To complete the proof, we will show that $p=p^*$ is the \emph{global} maximum in~\eqref{eq:my-affine-t} over all $p\in \Re^k$. For that, it suffices to prove that
that $p=p^*$ is the unique critical point over the entire $\Re^k$, i.e. the unique solution for the system
\begin{align}\label{eq:positive-critical}
 \frac{\partial}{\partial p_i}\, G(p,\mu) = b_i \quad
    \text{for each ad $i\in A$}.
\end{align}
Let us re-write this system using~\eqref{eq:positive-G-f}. (We find it convenient to use the notation $\tau$ and $\rho_i$, as in~\eqref{eq:positive-Hessian}.) Namely, for each $i\in A$ we have:
\begin{align*}
b_i + \tfrac{\beta m}{\mu_i}
    &= f_i(p,\mu) + \tfrac{\beta m}{\mu_i} \\
    &= p_i\, \left( \tau \rho_i^2 \right)
        + \textstyle \sum_{j\in A\setminus\{i\}} p_j\, \left( \rho_i \rho_j \right).
\end{align*}
It follows that the system in~\eqref{eq:positive-critical} is equivalent to
    $$ H\cdot p = w,$$
where the $k\times k$ matrix $H$ is defined by~\eqref{eq:positive-Hessian}, and the vector $w\in \Re^k$ is defined by
    $w_i = b_i + \tfrac{\beta m}{\mu_i} $
for all $i$. The matrix $H$ is non-singular (since it is positive-definite), so the system $H\cdot p=w$ has a unique solution $p$.
\end{proof}

\begin{claim}\label{cl:positive-definite}
Consider a $k\times k$ matrix $H$ given by~\eqref{eq:positive-Hessian}, where $\rho_1 \LDOTS \rho_k$ are arbitrary positive numbers. Assume
    $\tau\geq 1$.
Then $H$ is positive definite.
\end{claim}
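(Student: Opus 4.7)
The plan is to prove positive definiteness by exhibiting an explicit decomposition of $H$ as a sum of two positive semi-definite pieces, one of which is strictly positive on the whole space. Set $\rho = (\rho_1 \LDOTS \rho_k)^\dag \in \R_+^k$ and let $D$ be the $k \times k$ diagonal matrix with $D_{ii} = \rho_i^2$. My first step would be to observe that the entries of $H$ in~\eqref{eq:positive-Hessian} are precisely those of
\[
H \;=\; \rho\rho^\dag \;+\; (\tau - 1)\,D.
\]
Indeed, $\rho\rho^\dag$ contributes $\rho_i\rho_j$ in every entry (matching the off-diagonal entries of $H$ and giving $\rho_i^2$ on the diagonal), while $(\tau-1)D$ adds an extra $(\tau-1)\rho_i^2$ on the diagonal, summing to $\tau\rho_i^2 = H_{ii}$.

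Next I would compute the quadratic form: for any $v \in \R^k$,
\[
v^\dag H v \;=\; (\rho\cdot v)^2 \;+\; (\tau - 1)\sum_{i=1}^k \rho_i^2\, v_i^2.
\]
The first term is non-negative as a square. In the setting of interest, where $\tau = 1 + m - k \geq 2 > 1$ (since $k < m$), the second term is strictly positive whenever $v\neq 0$: any non-zero coordinate $v_i$ contributes $(\tau-1)\rho_i^2 v_i^2 > 0$ because $\rho_i > 0$ by assumption. Hence $v^\dag H v > 0$ for every $v \neq 0$, which is exactly the definition of positive definiteness.

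There is essentially no hard step; the entire argument reduces to spotting the rank-one-plus-diagonal decomposition $H = \rho\rho^\dag + (\tau-1)D$, after which positive definiteness follows from a one-line computation. The only subtle point worth flagging is that strict positive definiteness relies on $\tau > 1$ (the regime actually used in Lemma~\ref{lm:positive}(a), via $\tau = 1 + m - k$); at the boundary $\tau = 1$ the matrix reduces to the rank-one $\rho\rho^\dag$, which is merely positive semi-definite when $k \geq 2$.
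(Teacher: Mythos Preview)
Your proof is correct. The paper takes a slightly different route: it invokes the Gram-matrix characterization of positive-definite matrices and explicitly constructs vectors $w_1,\ldots,w_k\in\Re^{k+1}$ with $w_i\cdot w_j = H_{ij}$, namely $w_i = \sqrt{\tau-1}\,\rho_i\, e_i + \rho_i\, e_{k+1}$, and then checks linear independence. Your decomposition $H = \rho\rho^\dag + (\tau-1)D$ is really the same idea unpacked --- the $(k+1)$-th coordinate of the Gram vectors encodes the rank-one piece $\rho\rho^\dag$ and the first $k$ coordinates encode the diagonal piece $(\tau-1)D$ --- but you bypass the Gram machinery and compute the quadratic form directly, which is more elementary and slightly shorter. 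Both arguments degenerate at $\tau=1$ exactly as you flag (the paper's Gram vectors become collinear; your diagonal piece vanishes), so the claim as stated with ``$\tau\geq 1$'' is a touch loose; only $\tau>1$ is actually needed and used in Lemma~\ref{lm:positive}(a).
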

\begin{proof}
We will use the \emph{Gram matrix} characterization of positive-definite matrices. Namely, to prove that $H$ is positive-definite, it suffices to construct finite-dimensional vectors $w_1 \LDOTS w_k$ such that
    $H_{ij} = w_i \cdot w_j$
for all $i,j$ and the vectors are linearly independent. Consider vectors
    $w_1 \LDOTS w_k \in \Re^{k+1}$ defined as follows:
$$ w_i(\ell) = \begin{cases}
    \sqrt{\tau-1}\, \rho_i,    &\ell=i,\; \ell\leq k\\
    0,    & \ell\neq i,\; \ell \leq k \\
    \rho_i,& \ell=k+1, \\
\end{cases}
$$
It is easy to see that these vectors satisfy the desired properties.
\end{proof}

\subsection{An impossibility result for stochastically MIDR allocation rules}
\label{subsec:MIDR}

\OMIT{let a \emph{click-vector} be a vector $c = (c_1 \LDOTS c_m) \in \Re^m$ such that $c_j$ is the expected number of clicks received by ad $j$. Similarly,}

Let us consider stochastically MIDR allocation rules for multi-parameter MAB mechanisms. We show that any such allocation rule (with a significant but reasonable restriction) is essentially trivial.

Let us formulate what it means for a given allocation rule $\mA$ to be stochastically \MIDR in our setting, in a specific way that is convenient for us to work with. For a given bid vector $b\in (0,\infty)^m$, and CTR vector $\mu \in [0,1]^m$, let the \emph{allocation-vector} be a vector $a=(a_1 \LDOTS a_m)$ such that $a_j$ is the expected number of times ad $j\in [m]$ is allocated by $\mA$. 
Note that the expected welfare corresponding to a given allocation vector $a$ is simply $\sum_j a_j b_j \mu_j$. Let
$ \mF_0= \{ a\in [0,T]^m:\,\textstyle \sum_j a_j \leq T \}
$
be the set of all feasible allocation-vectors. (The sum of the entries can be less than $T$ because skips are allowed.) Then $\mA$ is stochastically \MIDR if and only if for all bid vectors $b$ and all CTR vectors $\mu$ it holds that
\begin{align}\label{eq:MIDR-alloc}
W(\mA(b)) = \max_{a\in \mF} \textstyle \sum_j\, a_j b_j \mu_j
\end{align}
for some $\mF\subset \mF_0$ that does not depend on $b$, but can depend on $\mu$.

Note that~\eqref{eq:MIDR-alloc} does not immediately provide a stochastically truthful mechanism via VCG payments, because the computation of VCG payments is not immediately feasible without knowing the CTRs. In fact, \eqref{eq:MIDR-alloc} does not even provide an immediate way to compute the allocation (assuming $|\mF|\geq 2$), again because of the issue of not knowing the CTRs. This is in stark contrast with the prior work on \MIDR (which studied settings without the ''no-simulation'' constraint) where the \MIDR property immediately gave rise to a truthful mechanism via the VCG payment rule.

However, if an allocation rule satisfies~\eqref{eq:MIDR-alloc} then a truthful mechanism can be obtained, with an arbitrarily small loss in welfare, via the transformation in~\cite{SingleCall-ec12}.

We consider a restricted version of~\eqref{eq:MIDR-alloc} where the range $\mF$ cannot depend on the CTRs (we will call such range $\mF$ \emph{CTR-independent}). We prove that any such allocation rule is welfare-equivalent to a time-invariant allocation rule. Here an allocation rule is called \emph{time-invariant} if in each round, it picks an ad independently from the same distribution over ads (this distribution may depend on the bids). Note that time-invariant allocation rules ignore the feedback that they receive (i.e., the clicks), and thus cannot adjust to the CTRs.

\begin{lemma}\label{lm:MIDR-impossibility}
Consider a multi-parameter MAB domain. Let $\mA$ be a stochastically \MIDR allocation rule with CTR-invariant range. For each bid vector $b$ there exists an allocation-vector $a=a(b)\in \mF_0$ such that
    $W(\mA(b)) = \textstyle \sum_j a_j b_j \mu_j$
for all CTR vectors $\mu$.
So $\mA$ is welfare-equivalent to a time-invariant allocation rule (where, letting $T$ be the time horizon, each ad $j$ is chosen with probability $a_j(b)/T$).
The approximation ratio of $\mA$ (compared to the welfare of the best ad) is at least $m$ on some problem instances.
\end{lemma}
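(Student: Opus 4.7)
My plan is to exploit a structural tension between two facts about $W(\mA(b))$ viewed as a function of the CTR vector $\mu$ (for any fixed bid vector $b$): it is a polynomial in $\mu$ because $\mA$ runs for only $T$ rounds, while the CTR-invariant MIDR identity forces it to be $1$-homogeneous in $\mu$. A polynomial that is $1$-homogeneous on a set with nonempty interior must be linear in $\mu$, and linearity is exactly the condition for welfare-equivalence with a time-invariant allocation.

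Concretely, let $\bar a_j(b,\mu)$ denote the expected number of rounds in which $\mA$ displays ad $j$, where the expectation is over $\mA$'s internal randomness and the clicks generated by $\mu$. Summing over the finitely many $T$-round sample paths of $\mA$, each path contributes a product of $\mA$'s branching probabilities (which do not depend on $\mu$) and factors of the form $\mu_i$ or $1-\mu_i$ for clicks observed on that path. Hence $\bar a_j(b,\mu)$ is a polynomial in $\mu$, and so is $W(\mA(b))=\sum_j \bar a_j(b,\mu)\,b_j\mu_j$. The MIDR assumption with CTR-invariant $\mF$ gives $W(\mA(b))=\max_{a\in\mF}\sum_j a_j b_j\mu_j$, whose right-hand side satisfies $f(\lambda\mu)=\lambda f(\mu)$ for all $\lambda\in[0,1]$ since $\mF$ does not depend on $\mu$. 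So $P(\mu):=W(\mA(b))$ is a polynomial with $P(\lambda\mu)=\lambda P(\mu)$ on $[0,1]\times[0,1]^m$; decomposing $P=\sum_d P_d$ into homogeneous components and matching coefficients of $\lambda$ at any fixed interior point of $[0,1]^m$ forces $P_d\equiv 0$ for each $d\neq 1$, so $P$ is linear in $\mu$.

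Writing $P(\mu)=\sum_j c_j(b)\,\mu_j$ and substituting $\mu=\varepsilon e_j$ into the polynomial expression $\sum_i \bar a_i(b,\mu) b_i \mu_i$ yields $c_j(b)=\bar a_j(b,\vec 0)\,b_j$ upon taking $\varepsilon\to 0^+$. Define $a_j(b):=\bar a_j(b,\vec 0)$, the expected allocation of $\mA$ in the hypothetical scenario where no ad is ever clicked; then $a_j(b)\in[0,T]$ and $\sum_j a_j(b)\le T$ since $\mA$ runs for at most $T$ rounds, so $a(b)\in\mF_0$, and $W(\mA(b))=\sum_j a_j(b)\,b_j\mu_j$ for every $\mu$ -- exactly the expected welfare of the time-invariant rule that shows ad $j$ with probability $a_j(b)/T$ in each round (and skips otherwise). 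For the final approximation bound, take $b=\vec 1$: since $\sum_j a_j(\vec 1)\le T$, some coordinate satisfies $a_{j^*}(\vec 1)\le T/m$, and on the problem instance with values $v_j=1$ and CTR vector $\mu=e_{j^*}$ the best-ad welfare is $T$ while $W(\mA(\vec 1))=a_{j^*}(\vec 1)\le T/m$, giving an approximation ratio at least $m$. The main technical obstacle is the $1$-homogeneity-forces-linearity step: the RHS of the MIDR identity is \emph{a priori} a piecewise-linear (globally non-linear) convex function of $\mu$, and one needs both the polynomial structure of $W$ (from $T$-round finiteness) and the CTR-invariance of $\mF$ (which keeps the $\max$ $1$-homogeneous) to force the collapse to a single linear function.
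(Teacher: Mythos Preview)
Your proof is correct and follows the same core strategy as the paper: for fixed $b$, the expected welfare $W(\mA(b))$ is a polynomial in $\mu$ (finitely many rounds), the CTR-invariant \MIDR identity forces it to be positively $1$-homogeneous in $\mu$, and a polynomial with this property on a set with nonempty interior must be linear.

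Your execution is cleaner than the paper's in two places. First, you read off $1$-homogeneity in one line from the form $\max_{a\in\mF}\sum_j a_j b_j\mu_j$ with $\mF$ fixed; the paper arrives at the same conclusion via a more elaborate half-space/convexity argument showing that each optimizer's region $S_\beta$ is convex and contains $0$. Second, you identify the allocation vector operationally as $a(b)=\bar a(b,\vec 0)$, the expected impression counts when no ad is ever clicked, which lies in $\mF_0$ by construction; the paper instead writes $W(\mu)=\gamma\cdot\mu$, sets $a_j=\gamma_j/b_j$ on $\{j:b_j>0\}$, and then separately verifies $a\in\mF_0$ by plugging in particular $\mu$-vectors and using the bounds $0\le W(\mu)\le T\max_j b_j\mu_j$. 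You also supply the short pigeonhole argument for the approximation-ratio claim, which the paper states in the lemma but does not spell out in the proof.
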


\OMIT{
\begin{corollary}
In the setting of Lemma~\ref{lm:MIDR-impossibility}, the approximation ratio of $\mA$ (compared to the welfare of the best ad) is at least $m$ on some problem instances.
\end{corollary}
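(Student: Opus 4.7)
The plan is to combine two facts about $W(\mA(b), \mu)$ viewed as a function of $\mu$ with $b$ fixed: it is a polynomial in $\mu$, and it is positively homogeneous of degree $1$ in $\mu$. Together these force $W(\mA(b), \mu)$ to be linear in $\mu$, from which both the welfare-equivalence and the $m$-approximation bound follow readily.

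To see polynomiality, let $a^*_j(b, \mu)$ denote the expected total number of times $\mA$ allocates ad $j$ on input $(b, \mu)$. Decomposing round $t$ by the observed history $h_{<t}$ of length $t-1$, the chain rule expresses $\Pr[H_{<t} = h_{<t} \mid \mu]$ as a product of factors, each either independent of $\mu$ (the algorithm's own randomization, which depends only on the history) or a click factor of the form $\mu_{x_s}^{c_s}(1-\mu_{x_s})^{1-c_s}$ that is linear in some $\mu_j$. Summing against $\Pr[X_t = j \mid H_{<t}]$ over the finitely many histories shows that the expected allocation at round $t$ is a polynomial in $\mu$, and hence so are $a^*_j(b, \mu)$ and $W(\mA(b), \mu) = \sum_j b_j \mu_j \, a^*_j(b, \mu)$. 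For positive homogeneity, the CTR-invariant \MIDR hypothesis gives
\[ W(\mA(b), \lambda \mu) = \max_{a \in \mF} \sum_j a_j b_j (\lambda \mu_j) = \lambda \, W(\mA(b), \mu) \quad \text{for all } \lambda > 0, \]
and a polynomial in $\mu$ that is homogeneous of degree $1$ must be linear (every monomial of degree $d \neq 1$ scales with $\lambda^d \neq \lambda$, forcing its coefficient to vanish). Hence $W(\mA(b), \mu) = \sum_k \alpha_k(b)\, \mu_k$ for some coefficients $\alpha_k(b)$.

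Matching the monomial $\mu_k$ in the identity $\sum_j b_j \mu_j a^*_j(b, \mu) = \sum_k \alpha_k(b) \mu_k$ yields $\alpha_k(b) = b_k \, a^*_k(b, 0)$, so setting $a(b) := a^*(b, 0) \in \mF_0$ (the expected allocation vector of $\mA$ when all CTRs are zero, which is a valid element of $\mF_0$) gives $W(\mA(b), \mu) = \sum_k a(b)_k b_k \mu_k$ for all $\mu$. The time-invariant rule that at each round independently picks ad $j$ with probability $a(b)_j / T$ (and skips with the remaining probability, which is non-negative since $\sum_j a(b)_j \leq T$) then has expected total allocation vector $a(b)$ and hence matches the welfare of $\mA$ on every $\mu$. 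For the $m$-approximation bound, take $b = \vec{1}$; pigeonhole on $\sum_j a(\vec{1})_j \leq T$ yields an ad $j^*$ with $a(\vec{1})_{j^*} \leq T/m$, and on the instance with CTRs $\mu = e_{j^*}$ the optimum welfare is $T$ while $\mA$ achieves only $a(\vec{1})_{j^*} \leq T/m$, a ratio of at least $m$. The main obstacle in the plan is the polynomiality step, which requires carefully tracking the algorithm's adaptive, history-dependent behavior; the rest is routine.
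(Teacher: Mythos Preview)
Your proposal is correct and follows essentially the same route as the paper's proof of Lemma~\ref{lm:MIDR-impossibility}: establish that $W(\mA(b),\mu)$ is a polynomial in $\mu$, show it is positively homogeneous of degree one, conclude linearity, extract the allocation vector, and then use pigeonhole on $\sum_j a_j \leq T$ for the $m$-approximation bound. Two minor remarks: your homogeneity step (pulling $\lambda$ out of the $\max$ directly) is cleaner than the paper's half-space/convexity argument, and your identification $a(b)=a^*(b,0)$ avoids the paper's case split on $b_j=0$; the only imprecision is that ``for all $\lambda>0$'' should really be ``for all $\lambda\in(0,1]$'' so that $\lambda\mu$ remains a valid CTR vector, but this range already suffices to force a polynomial to be homogeneous of degree one.
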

} 

\begin{proof}
Let us fix the bid vector $b$ and consider both sides of~\eqref{eq:MIDR-alloc} as functions of $\mu$. First, we note that the expected welfare $W(\mA(b))$ is a finite-degree polynomial in variables $\mu_1 \LDOTS \mu_m$.~
\footnote{Namely, the degree is at most $T$, the time horizon.}
This is because, letting $\mA_j(b,\rho,t)$ be the probability that ad $j$ is displayed at round $j$ given click-realization $\rho$, it holds that
\begin{align}\label{MIDR:W-polynomial}
W(\mA(b)) = \textstyle \,\sum_\rho \Pr[\rho]\;
    \sum_{j,t} \rho(t,j)\, \mA_j(b,\rho,t).
\end{align}
Here the outer sum is over all click-realizations $\rho$, and the inner sum is over all rounds $t$ and all ads $j$. $\Pr[\rho]$ is the probability that $\rho$ is realized for the given CTR vector. \eqref{MIDR:W-polynomial} is a polynomial in the CTRs because for each click-realization $\rho$, the inner sum is a fixed number, and $\Pr[\rho]$ is a polynomial in the CTRs of degree $T$.

Let us re-write~\eqref{eq:MIDR-alloc} as follows:
\begin{align}\label{eq:MIDR-alloc-rewrite}
W(\mA(b)) = \max_{\beta\in \mF_b} \textstyle \beta \cdot \mu,
\text{ where }
\mF_b = \{ \beta\in \Re^m: \beta_j = a_j b_j \text{ for each $j$}, a\in \mF  \}.
\end{align}

Since $\mF_b$ is fixed, the right-hand side of~\eqref{eq:MIDR-alloc-rewrite} is uniquely determined by $\mu$, denote it $W(\mu)$. Note that for each $\beta\in \mF_b$, it holds that
$$ W(\mu) = \beta\cdot\mu
    \;\text{if and only if}\;
(\beta-\beta')\cdot \mu\geq 0 \text{ for all } \beta'\in \mF_b.
$$
For each $\beta \in \Re^k$, consider the half-space
    $H_\beta = \{ \mu \in [0,1]^m:\, \beta\cdot \mu\geq 0\}$.
Then
$$ W(\mu) = \beta\cdot \mu
    \;\text{if and only if}\;
    \mu \in S_\beta,
    \;\text{where}\;
    S_\beta = \bigcap_{\beta'\in \mF_b} H_{\beta-\beta'}.
$$
Note that $S_\beta$ is a convex set, as an intersection of convex sets. Moreover, all half-spaces in the intersection contain the $0$-vector, and hence so does $S_\beta$. Therefore if $W(\mu) = \beta\mu$ for some $\mu \neq 0$ and $\beta\in \mF_b$ then by convexity for any $z\in [0,1]$ it holds that $z\mu \in S_\beta$, and therefore
    $W(z \mu) = z\,(\beta\cdot\mu) = z\,W(\mu)$.
We have proved the following:
\begin{align}\label{eq:MIDR-homogen}
W(z \mu) = z\,W(\mu)
\text{ for every $z\in [0,1]$ and $\mu\in[0,1]^m$}.
\end{align}

Now recall that $W(\mu)$ is a finite-degree polynomial in $\mu$. A known fact about multi-variate polynomials is that any finite-degree polynomial in $\mu$ which satisfies \eqref{eq:MIDR-homogen} is in fact of the form $W(\mu) = \gamma\cdot \mu$ for some $\gamma\in \Re^m$.

Now, let $A=\{j: b_j>0\}$ be the set of ads with non-zero bids. Define a vector $a\in \Re^m$ by $a_j = \gamma_j/b_j$ for each ad $j\in A$, and $a_j=0$ otherwise. To complete the proof, it remains to show that, letting $T$ be the time horizon, $a/T$ is a valid distribution over the ads (assuming skips are allowed). That is, we need to show that $a_j\geq 0$ and
    $\sum_{j\in A} a_j \leq T$.
We use the fact that for any allocation rule, the expected welfare is at least $0$ and at most that of always playing the best ad:
\begin{align}\label{eq:MIDR-last-step}
    W(\mu) = \textstyle \sum_{j\in A} a_j b_j \mu_j \in [0, T\,\max_j b_j \mu_j].
\end{align}
Applying~\eqref{eq:MIDR-last-step} with $\mu$ being the unit vector in the direction $j\in A$, it follows that $W(\mu) = a_j b_j \geq 0$, so $a_j\geq 0$.
Now, let
    $B = (\max_{j\in A} b_j^{-1})^{-1}$
and define a CTR vector $\mu$ by $\mu_j = B/b_j$ for $j\in A$ and $\mu_j=0$ otherwise.%
\footnote{The $B$ is needed to make sure that $\mu_j\leq 1$.}
 Plugging this $\mu$ into~\eqref{eq:MIDR-last-step}, we obtain
    $W(\mu) = \sum_{j\in A} B a_j  \leq B T  $,
which implies $\sum_{j\in A} a_j \leq T$, completing the proof.
\end{proof}

\begin{small}
\bibliographystyle{plainnat}
\bibliography{bib-abbrv,bib-slivkins,bib-AGT,bib-bandits}
\end{small}


\end{document}